\newcommand{\matindex}[1]{\mbox{\scriptsize#1~~}}
\definecolor{MyGreen}{rgb}{0, 0.7, 0}
\definecolor{MyRed}{rgb}{0.8, 0, 0}
	\newtheorem{fact}{Fact}%
\newcommand{\midd}{\mathbin{:}}
\title{Random Assignment Under Bi-Valued Utilities:\\ Analyzing Hylland-Zeckhauser, Nash-Bargaining, and other Rules}
\author{Haris Aziz\inst{1,}\inst{2}  \and Ethan Brown\inst{1}}
\institute{UNSW Sydney, Australia  \\
\email{\{haris.aziz,e.brown\}@unsw.edu.au}
\and
Data61 CSIRO\\
\medskip
}
\newcommand{\@chapapp}{\relax}%
\begin{document}
	\maketitle

		\begin{abstract}
		The Hylland-Zeckhauser (HZ) rule is a well-known rule for random assignment of items. The complexity of the rule has received renewed interest recently with Vazirani and Yannakakis (2020) proposing a strongly polynomial-time algorithm for the rule under bi-valued utilities, and making several general insights. We study the rule under the case of agents having bi-valued utilities. We point out several characterizations of the HZ rule, drawing clearer relations with several well-known rules in the literature. As a consequence, we point out alternative strongly polynomial-time algorithms for the HZ solution.
		We also give reductions from computing the HZ solution to computing well-known solutions based on leximin or Nash social welfare. An interesting contrast is that the HZ rule is group-strategyproof  whereas the unconstrained competitive equilibrium with equal incomes rule is not even strategyproof. We clarify which results change when moving from 1-0 utilities to the more general bi-valued utilities.  
		Finally, we prove that the closely related Nash bargaining solution violates envy-freeness and strategyproofness even under 1-0 utilities. 
		\end{abstract}

	
		\section{Introduction}

		%


		Competitive Equilibrium with Equal Incomes (CEEI) is one of the most fundamental solution concepts in resource allocation~\citep{Budi11a,Moul03a,Vari74a}. The concept is based on the idea of a market-based equilibrium which underpins classical economics and has been referred to as the crown jewel of mathematical economics.
		In CEEI, each agent is considered to have equal budget of unit one to spend. An assignment of items satisfies CEEI if, for some price vector for the items, the supply meets demand.
		In other words, the agents get allocations that give them the maximum possible utility.
		CEEI is a well-established in economics because it is based on the idea of a market equilibrium. It is an attractive solution concept because it implies envy-freeness and Pareto optimality. 

		We consider the problem of allocating $n$ items among $n$ agents. Agents have additive cardinal utilities over the items.
		If we view the items as divisible and do not impose any limits on the amount of items given to agents, then CEEI is well-understood. Under additive utilities, CEEI is equivalent to maximizing Nash social welfare. The equivalence follows from an analysis of the Eisenberg-Gale convex program that maximizes the Nash welfare within the assignments of divisible items. However, such a characterization disappears when each agent has a demand for exactly one unit of items. The constraint of unit capacities is especially critical when the fractions of items given to agents are interpreted as probabilities and the goal is to probabilistically find an assignment in which each agent gets one item. If each agent gets one unit of items, then the given probabilistic assignment can be instantiated into a lottery over perfect matchings using Birkhoff's theorem. 

		In this paper, we focus on the pseudo-market rule proposed by \citet{HyZe79a}  that is inspired by CEEI. We will refer to the rule as the HZ rule. HZ can be viewed as the suitable 
		CEEI solution for \emph{probabilistic} or \emph{random} assignment of \emph{indivisible} items.\footnote{There are several works on probabilistic assignment of items to agents~\citep{BoMo01a,ABS13a,ABBM14a,KMN11a,RKFZ14a,KaSe06a}.} The HZ rule has been referred to as CEEI in the literature. We will not use the term CEEI for HZ so that it is clear that we assume the unit-demand requirement when referrering to the HZ solution. The complexity of computing the HZ solution has been open for 40 years\citep{VaYa20a,Seth10a}. 
		Recently, \citet{VaYa20a} explored the computational complexity of the HZ rule. They make several general insights including the fact that the HZ solution can be irrational. They present a strongly polynomial-time algorithm to compute the HZ solution under bi-valued utilities. 
We also focus on the random assignment problem under (cardinal) bi-valued utilities. Bi-valued utilities are meaningful in any scenario in which agents partition items among better and worse and we give them flexibility to express the intensity of preferences. Cognitively, an agent may have a threshold and may prefer items that reach that threshold level. This threshold level may indicate usability or acceptable quality level. Bi-valued are also more general than binary (1-0) utilities that are studied in resource allocation~\citep{BKV18a,HPPS20a}, random assignment~\citep{BoMo04a}, approval voting~\citep{BrFi07c}, probabilistic voting~\citep{BMS05a} and committee voting~\citep{ABC+16a}.

		\paragraph{Contributions}
		We consider random assignment problem under (cardinal) bi-valued utilities.
		We prove several characterizations of the HZ rule, drawing clearer relations with several well-known rules in the literature. As a consequence, we point out alternative strongly polynomial-time algorithms for the HZ solution. In particular, we show that 
		the Extended Probabilistic Serial (EPS) by \citet{KaSe06a} (which is designed for egalitarian objectives) also returns the HZ solution. 
		For bi-valued utilities, we also provide a reduction from computing the HZ rule to computing the leximin or maximum Nash welfare solution. Another structural insight we have is the following one:  for all HZ solutions under bi-valued utilities, each item gets the same price in all the solutions as long as the underlying dichotomous preferences do not change. 
		We also show the following interesting contrast. Under bi-valued utilities, the HZ rule is group-strategyproof whereas the (unconstrained) CEEI rule is not even strategyproof. Therefore, an innocuous-looking relaxation of the unit-demand requirement leads to completely different strategic properties. Our result regarding the manipulability of the CEEI rule also implies that for bi-valued utilities, the MNW rule of \citet{CKM+16a} for indivisible goods is not strategyproof.
		Finally, we study the Nash Bargaining rule that has close links with MNW, CEEI, and HZ rules. We prove that the Nash bargaining solution violates envy-freeness and strategyproofness even under 1-0 utilities.

		\section{Preliminaries}
		%

		An assignment problem is a triple $(N,O,u)$ such that $N=\{1,\ldots, n\}$ is the set of agents, $O=\{o_1,\ldots, o_n\}$ is the set of items, and $u=(u_1,\ldots, u_n)$ is the utility profile which specifies for each agent $i\in N$ utility function $u_{i}$ where $u_{ij}$ denotes the utility of agent $i$ for item $o_j$. We assume that for each $j\in \{1,\ldots, n\}$,  $u_{ij}>0$ for some $i\in N$ and for each $i\in N$,  $u_{ij}>0$ for some $j\in \{1,\ldots,n\}$. 


		A fractional assignment $x$ is a $(n\times n)$ matrix $[x_{ij}]$ such that $ x_{ij} \in [0,1]$ for all $i\in N$, and $o_j\in O$, and  $\sum_{i\in N}x_{ij}= 1$ for all $o_j\in O$.
		The value $x_{ij}$ represents the fraction of item $o_j$ being allocated to  agent $i$. In a probabilistic context, the value $x_{ij}$ has a natural interpretation as the probability of $i$ getting item $o_j$.
	Each row $x_i=(x_{i1},\ldots, x_{in})$ represents the \emph{allocation} of agent $i$. 
		We will denote the set of all allocations by $\mathcal{A}$.
		An allocation $x_i$ is balanced if $\sum_{o_j\in O}x_{ij}=1$.
		We will denote the set of all balanced allocations by $\mathcal{A}_b$.
		For any allocation $x_i$, we will refer to $\sum_{o_j\in O}x_{ij}$ as the \emph{size} of the allocation. 
		The set of columns correspond to the items $o_1,\ldots, o_n$.
		A fractional assignment is \emph{discrete} if $x_{ij}\in \{0,1\}$ for all $i\in N$ and $o_j\in O$. 
		Let the set of all fractional assignments be $\mathcal{F}$.
		A fractional assignment is \emph{balanced} if $\sum_{o_j\in O}x_{ij}=1$ for all $i\in N$. 
		Let the set of all balanced fractional assignments be $\mathcal{F}_b$.


		The expected utility received by agent $i$ from assignment $x$ is $u_i(x_i)=\sum_{o_j\in O}x_{ij}u_{ij}.$ 

		We say that the utility functions are binary or 1-0 if $u_{ij}\in \{0,1\}$ for all $i,j\in N$. We say that the utility functions are bi-valued if for all $i$, $u_{ij}\in \{\alpha_i,\beta_i\}$ where $\alpha_i>\beta_i\geq 0$. Both binary and bi-valued preferences are forms of dichotomous preferences. We will denote by $D_i$ the set of items most preferred by agent $i$. 	For any bi-valued utility function involving values $\alpha_i, \beta_i$, we call  by \emph{binary-reduced} those utility functions in which $\alpha_i$ is turned into 1 and $\beta_i$ is turned into zero.

		An assignment $x$ is \emph{Pareto optimal (PO)} if there exists no other assignment $y$ such that $u_i(y_i)\geq u_i(x_i)$ for all $i\in N$ and $u_i(y_i)> u_i(x_i)$ for some $i\in N$.
		An assignment $x$ is \emph{Pareto optimal among balanced assignments} if there exists no balanced assignment $y$ such that $u_i(y_i)\geq u_i(x_i)$ for all $i\in N$ and $u_i(y_i)> u_i(x_i)$ for some $i\in N$.

		We present a simple routine to turn an unbalanced assignment into a balanced one. We will refer to it as the balancing operation.
					\paragraph{Balancing operation} 
		If an assignment is not balanced, we consider sets 
		$N^+= \{i\in N\midd \sum_{o_j\in O}x_{ij}>1\}$ and 
		$N^-= \{i\in N\midd \sum_{o_j\in O}x_{ij}<1\}$. 
		Each agent  $i\in N^+$ gives away the least preferred items from her allocation so as to ensure that her allocation $x_i$ has size $1$. The donated items are then given to $N^-$ arbitrarily to ensure that $\sum_{o_j\in O}x_{ij}=1$ for all $i\in N$.



		\section{Solution Concepts}

		We present a few prominent solution concepts starting with the HZ solution.

		An assignment $x$ is an \emph{HZ solution} if $x\in \mathcal{F}_b$ and  there exists a price vector $p=(p_1,\ldots,p_n)$ that specifies the price $p_j$ of item $o_j$ such that the maximal share that each $i\in N$ can get with
		 budget $1$ is $$x_i\in \{x_i'\in \mathcal{A}_b \midd x_i'\in \text{argmax}\{u_i(x_i'):\sum_{o_j\in O}x_{ij}'\cdot(p_j)\leq 1\}\}$$ and {
		 $$ \sum_{o_j \in O}{p_j x_{ij}}= \min \{\sum_{o_j \in O}{p_j y_{ij}} \;|\; y_i \in \mathcal{A}_b, \sum_{o_j \in O}{u_{ij} y_{ij}} \geq \sum_{o_j \in O}{u_{ij} x_{ij}}\}.
	$$}
	 
		 We will refer to the rule that returns the HZ solution as the \emph{HZ rule}. 

		 A closely related concept is CEEI.
		An assignment $x$ satisfies \emph{competitve equilibrium with equal incomes (CEEI)} if there exists a price vector $p=(p_1,\ldots,p_n)$ that specifies the price $p_j$ of item $o_j$ such that the maximal share that each $i\in N$ can get with
		 budget $1$ is $$x_i\in \{x_i'\in \mathcal{A}\midd x_i'\in \text{argmax}\{u_i(x_i'):\sum_{o_j\in O}x_{ij}'\cdot(p_j)\leq 1\}\}$$ and {
		 $$ \sum_{o_j \in O}{p_j x_{ij}}= \min \{\sum_{o_j \in O}{p_j y_{ij}} \;|\; y \in \mathcal{A}, \sum_{o_j \in O}{u_{ij} y_{ij}} \geq \sum_{o_j \in O}{u_{ij} x_{ij}}\}.
	$$}
		The CEEI rule returns a CEEI assignment.  
		 CEEI~\citep{Vari74a}  coincides with the market equilibrium notion studied in \citep{Vazi07a}. Note that the HZ solution can be viewed as CEEI with the additional constraint that each agent gets one unit of items. 
		 {For both CEEI and HZ, we require that if an agent can get multiple maximal shares under given prices, she gets a cheapest possible maximal share.} 

		The MNW (Maximum Nash Welfare) rule returns
		an assignment $x$ that maximizes the Nash social welfare:
		$$x\in \arg\max_{x'\in \mathcal{F}} \prod_{i\in N}u_i(x_i').$$

		For two vectors $\vec{u}, \vec{v} \in \mathbb{R}^k$, we say that $\vec{u}$ leximin-dominates $\vec{v}$, written $\vec{u} \succ_{lex} \vec{v}$, if there exists an $i \leq k$ such that $\vec{u}_j = \vec{v}_j,$ for all $j < i$, and $\vec{u}_i > \vec{v}_i$. Finally, $\pi$ is leximin optimal if there is no $\pi'$ such that $\vec{u}(\pi') \succ_{lex} \vec{u}(\pi)$. 	The \emph{leximin rule} is the rule that returns a leximin optimal assignment. 

		CEEI, MNW, and leximin may not return a balanced assignment.

		We will say that two rules are \emph{equivalent} if they result in the same utilities for the agents.

		\begin{example}\label{example:main}
			Consider the following instance with two agents and items.
			\begin{center}
					\setlength{\tabcolsep}{6pt}
					\begin{tabular}{ccccccccc}
						& $o_1$ & $o_2$\\
						\midrule
						$1$ &$3$&$2$&\\
						$2$ &$1$&$0$\\
					\end{tabular}
				\end{center}
	
				For this instance, the HZ, MNW constrained to balanced assignments, and the MNW solution are as follows. 
	
						HZ solution $=$
				 \begin{blockarray}{ccccccccccc}

				 		&&\matindex{$o_1$}&\matindex{$o_2$}&\\
				 	    \begin{block}{l(cccccccccc)}
				 			\matindex{$1$}~~& &$\nicefrac{1}{2}$&$\nicefrac{1}{2}$ \\
				 			\matindex{$2$}& &$\nicefrac{1}{2}$&$\nicefrac{1}{2}$ \\
				 	    \end{block}
				 	  \end{blockarray}

	
						MNW solution constrained to balanced assignments $=$
				 \begin{blockarray}{ccccccccccc}

				 		&&\matindex{$o_1$}&\matindex{$o_2$}&\\
				 	    \begin{block}{c(cccccccccc)}
				 			\matindex{$1$}& &$0$&$1$ \\
				 			\matindex{$2$}& &$1$&$0$ \\
				 	    \end{block}
				 	  \end{blockarray}

						CEEI $=$ MNW solution $=$
				 \begin{blockarray}{ccccccccccc}

				 		&&\matindex{$o_1$}&\matindex{$o_2$}&\\
				 	    \begin{block}{c(cccccccccc)}
				 			\matindex{$1$}& &$\nicefrac{1}{6}$&$1$ \\
				 			\matindex{$2$}& &$\nicefrac{5}{6}$&$0$ \\
				 	    \end{block}
				 	  \end{blockarray}
	
	{For CEEI, the price of $o_1$ is $6/5$ and $o_2$ is $4/5$.}

			Leximin solution $=$ Balanced Leximin $=$
		 \begin{blockarray}{ccccccccccc}

		 		&&\matindex{$o_1$}&\matindex{$o_2$}&\\
		 	    \begin{block}{c(cccccccccc)}
		 			\matindex{$1$}& &$0$&$1$ \\
		 			\matindex{$2$}& &$1$&$0$ \\
		 	    \end{block}
		 	  \end{blockarray}

			\end{example}

			Note that if all the outcomes are balanced (as is the case under the HZ rule) and an agent's preferences are dichotomous, then an agent's preferences over the allocations only depends on the stochastic dominance relation over outcomes.


		%
		%
		%
		%
		%
		%
		%
		%

			\section{Bi-valued Utilities: Relations of HZ with other Rules and Algorithms}

				Binary (1-0) utilities are a special class of utilities under which many rules and algorithms coincide.

			\begin{fact}\label{fact:1}
				Under 1-0 utilities, the following rules are  equivalent even if the number of items is different than the number of agents:

			\begin{enumerate}
				\item Leximin rule 
				\item Maximum Nash Welfare (MNW) rule
				\item Competitive Equilibrium with Equal Incomes (CEEI)~\citep{Vari74a} 
				\item Controlled Cake Eating Algorithm (CCEA)\footnote{The Controlled Cake Eating Algorithm (CCEA) algorithm~\citep{AzYe14a} and Mechanism 1 of \citet{CLPP10a} are described in the context of cake cutting.  They also apply to allocation of items: each cake segment can be treated as a separate item.}~\citep{AzYe14a} 
			\item Mechanism 1 of \citet{CLPP10a}.
			\end{enumerate}
			\end{fact}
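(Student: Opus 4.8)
The plan is to show that all five rules return the same per-agent utility vector --- namely the leximin utility vector --- and then to chain the resulting equivalences by transitivity. The starting observation is that under binary ($1$-$0$) utilities agent $i$'s utility from an allocation $x_i$ is simply the total fraction of her desired items, $u_i(x_i)=\sum_{o_j\in D_i}x_{ij}$; in particular each $u_i$ is linear, so the set of attainable utility vectors $U=\{(u_1(x),\dots,u_n(x))\midd x\in\mathcal{F}\}$ is a convex polytope. By a standard max-flow/Hall argument, a vector $u\ge 0$ is attainable iff $\sum_{i\in S}u_i\le r(S)$ for every $S\subseteq N$, where $r(S)=|\bigcup_{i\in S}D_i|$. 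Since this coverage function $r$ is submodular, $U$ is exactly the polymatroid $P(r)=\{u\ge 0\midd u(S)\le r(S)\ \forall S\}$, and because every item is desired by someone we have $r(N)=|O|$, so the Pareto-efficient points of $U$ are precisely the bases $B(r)=\{u\in P(r)\midd u(N)=r(N)\}$, each realizable by a fully-allocating assignment.

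The crux is the equivalence of MNW and leximin. First I would note that both optima lie in $B(r)$: the Nash product is strictly increasing in each coordinate, and leximin never prefers a dominated point, so neither optimum can sit strictly inside the down-closed set $P(r)$. Each optimum is moreover unique as a utility vector; for MNW this is immediate from strict concavity of $\sum_i\log u_i$ over the convex set $U$ (a strictly positive utility vector is attainable by giving every agent a small fraction of some desired item, so the maximizer has all coordinates positive), and for leximin it follows because two distinct attainable vectors with the same sorted profile would have a feasible midpoint whose sorted profile strictly leximin-dominates theirs. To identify the two points I would invoke Fujishige's theorem on the lexicographically optimal base of a polymatroid: the leximin base of $P(r)$ is the unique base maximizing $\sum_i h(u_i)$ for every strictly concave $h$, and taking $h=\log$ shows it coincides with the Nash-welfare maximizer. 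This is the step I expect to be the main obstacle --- both in correctly establishing the polymatroid structure (including the generality that $|O|$ may differ from $|N|$) and in marshalling the lexicographically-optimal-base machinery; an alternative is to cite the known coincidence of leximin and maximum Nash welfare under dichotomous preferences directly.

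The remaining two equivalences I would inherit from established results. For CEEI, the classical Eisenberg--Gale characterization of Fisher markets with equal budgets states that, for divisible goods with additive utilities, every competitive equilibrium from equal incomes realizes the Nash-welfare-maximizing utility vector; hence CEEI returns the same utilities as MNW, with no appeal to the binary structure. For the two algorithms, the Controlled Cake Eating Algorithm of \citet{AzYe14a} and Mechanism~1 of \citet{CLPP10a} are designed to output the egalitarian/leximin allocation under dichotomous preferences (treating each item, or cake segment, as a good), so by their correctness they return the leximin utility vector. Combining these with the MNW--leximin coincidence of the previous paragraph, all five rules produce identical per-agent utilities, which is exactly the claimed equivalence.
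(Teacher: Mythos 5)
Your proposal is correct, but it takes a genuinely different route from the paper, which in fact offers no proof at all: the statement is recorded as a Fact and justified purely by citation --- \citet{Vazi07a} for the CEEI--MNW equivalence (Eisenberg--Gale) and \citet{AzYe14a} for the equivalence of all five rules under 1-0 utilities --- together with the remark that leximin yields a unique utility profile, so all the rules do. What you do differently is supply actual mathematical content for the central leximin--MNW coincidence: the achievable utility vectors are governed by the coverage function $r(S)=\lvert\bigcup_{i\in S}D_i\rvert$, the Pareto frontier is the base polytope $B(r)$, and Fujishige's theorem on lexicographically optimal bases identifies the leximin base with the maximizer of every separable strictly concave objective, in particular $\sum_i \log u_i$. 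This buys a self-contained structural argument that also makes the ``number of items may differ from the number of agents'' clause transparent, at the cost of heavier polymatroid machinery; note that for the remaining items you fall back on exactly the paper's move, since ``CEEI equals MNW by Eisenberg--Gale'' and ``CCEA and Mechanism~1 of \citet{CLPP10a} output the leximin vector by their correctness'' are precisely the results of \citet{Vazi07a} and \citet{AzYe14a} that the paper cites. One imprecision worth fixing: under the paper's definition of $\mathcal{F}$ every item must be fully allocated, so $U$ is \emph{not} exactly the polymatroid $P(r)$ --- with two agents who both desire a single item, $U=B(r)\subsetneq P(r)$, because forced leftovers can push utilities up and $U$ need not be down-closed. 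This is harmless: all your argument needs is $B(r)\subseteq U\subseteq P(r)$ (a base sums to $|O|$, so it is realized by a full allocation of desired items only), which still gives that the Pareto points of $U$ are bases and that the MNW and leximin optima over $U$, $P(r)$, and $B(r)$ all coincide.
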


			For example, CEEI and MNW are well-known to be equivalent even for general additive utilities~\citep{Vazi07a}. Under 1-0 utilities, all the rules were shown to be equivalent~\citep{AzYe14a}. 
			Since the leximin rule gives rise to a unique utility profile (agents' utilities do not change under different leximin outcomes), it follows that all the rules above give rise to a unique utility profile. 
			The rules above may not return a balanced assignment even for 1-0 utilities. Therefore, they are most suitable when the items are viewed as divisible.

			Next we highlight the intimate connection between the HZ rule under bi-valued utilities and the elegant Extended Probabilistic Serial (EPS) algorithm of \citet{KaSe06a}. 
			EPS is well-defined for any weak orders but we will stick to its presentation for the case of dichotomous preferences. The running time is $O(n^3\log n)$ under dichotomous preferences.  \citet{KaSe06a} note that EPS for dichotomous preferences is equivalent to the egalitarian rule studied by \citet{BoMo04a}.
			The egalitarian rule studied by \citet{BoMo04a} is the leximin rule applied to the set of balanced (unit-demand) assignemnts. \citet{BoMo04a} studied the rule in the context of two-sided matching with men on one side and women on the other side. 
			We adapt the presentation of \citet{KaSe06a} which is more algorithmic in nature and is directly focussed on the assignment problem with one-sided preferences. 

					\paragraph{Extended Probabilistic Serial (EPS) for dichotomous preferences} 
					For each agent $i$, let $D_i$ denote the set of items most preferred by $i$. Agents gradually guarantee more and more fractional amount of liked items until agents cannot guarantee more. At this point there is a bottleneck set of agents who cannot fractionally get more amount of liked items. 
					Let 
					$v=\min_{C\subseteq N}\frac{|\cup_{i\in C}D_i|}{|C|}$
					and $X_1$ denote the largest cardinality set $X_1\subseteq N$ for which $\frac{|\cup_{i\in X_1}D_i|}{|X_1|}=v$. Such a bottleneck set can be computed via network flows as explained by \citet{KaSe06a}. When we are allowed fractional allocations, then agents in $X_1$ can each get utility $v$. 
					These agents $X_1$ and the items that they like $O_1$ are removed from the market. The same process is recursively applied to the remaining market until all items are allocated. Once all agents exit from the market, then the remaining items are allocated among those agents who got less than one unit of items to ensure that the allocation is balanced. Note that each successive bottleneck set has a strictly higher utility guarantee $v$: $v_1<v_2<\cdots < v_k$. The algorithm is specified as Algorithm~\ref{algo:EPS}.

						\begin{algorithm}[h!]
								\caption{EPS rule for dichotomous preferences}
								\label{algo:EPS}
							\normalsize
							\begin{algorithmic}

						\REQUIRE $(N,O,u)$ under 1-0 utilities
						\ENSURE	A balanced assignment 									
					\end{algorithmic}
						\begin{algorithmic}[1]
							\normalsize
			
						\STATE $N'\longleftarrow N$; $O'\longleftarrow O$
						\STATE $k\longleftarrow 1$
						\WHILE{$N'\neq \emptyset$}
						\IF{each agent $i\in N'$ can get one unit of items in $D_i\cap O'$}
						\STATE Give each agent $i\in N'$ one unit of items from $D_i\cap O'$ [can be done via an algorithm to compute a maximum size matching]
						\ELSE
						\STATE 	 
								Let 
								\[v_k=\min_{C\subseteq N'}\frac{|\cup_{i\in C}(D_i\cap O')|}{|C|}\]
								and $X_k$ denote the largest cardinality set $X_k\subseteq N'$ for which $\frac{|\cup_{i\in X_k}(D_i\cap O')|}{|X_k|}=v_k$. Such a bottleneck set can be computed via network flows~\citep{KaSe06a}. Let $\cup_{i\in X_k}(D_i\cap O')$ be $O_k$.
								\STATE Agents in $X_k$ can each get utility $v_k$ by getting items from $O_k$. Assignment $x$ for agents in $X_k$ is finalized. These agents $X_k$ and the items that they like $O_k$ are removed from the market:
								$N'\longleftarrow N'\setminus X_k'$; $O'\longleftarrow O'\setminus O_k$; 
								\ENDIF
								\STATE $k\longleftarrow k+1$
						\ENDWHILE
								\STATE \label{step:giveback} For agents in $\{i\in N\midd \sum_{o_j\in O}x_{ij}<1\}$, give them any remaining unallocated items in $O^0\subset O'$ to ensure that the assignment $x$ is balanced.
								\RETURN $x$.
										
								\end{algorithmic}
							\end{algorithm}

		
			Next, we present a theorem which clarifies the relations between many rules.

			\begin{theorem}\label{th:charac}
				For 1-0 utilities, the following rules are equivalent. 
				\begin{enumerate}
							\item \label{1} HZ rule
					\item \label{2} EPS rule
							\item \label{3} Egalitarian rule of \citet{BoMo04a}
					\item \label{4} Leximin applied to the set of balanced assignments
					\item \label{5} MNW applied to the set of balanced assignments
					\item \label{6} The rule that applies the balancing operation to a leximin solution
					\item \label{7} The rule that applies the balancing operation to a MNW solution
					\item \label{8} The rule that applies the balancing operation to a CEEI solution.
				\end{enumerate}
				\end{theorem}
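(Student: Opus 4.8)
The plan is to designate the leximin-optimal balanced utility profile --- call it $u^\ast$ --- as a hub and to show that each of the eight rules realises exactly $u^\ast$; since the leximin solution induces a unique utility profile, this establishes pairwise equivalence. Rules~(\ref{2}),~(\ref{3}),~(\ref{4}) coincide with $u^\ast$ immediately: rule~(\ref{4}) is $u^\ast$ by definition, and the exposition preceding the theorem (following \citet{KaSe06a} and \citet{BoMo04a}) identifies EPS with the egalitarian rule and the latter with leximin over balanced assignments. It remains to attach rule~(\ref{5}), the balancing-based rules~(\ref{6})--(\ref{8}), and the HZ rule~(\ref{1}).

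For rule~(\ref{5}) I would invoke the coincidence of leximin and MNW under dichotomous utilities --- the same phenomenon underlying Fact~\ref{fact:1} --- now with the feasible set restricted to the convex polytope $\mathcal{F}_b$ of balanced assignments, on which each $u_i$ is linear and the valuations are binary. The crucial structural lemma for the remaining rules is this: in any Pareto optimal assignment whose utility profile equals the unconstrained leximin profile $u^{\mathrm{un}}$, no agent $i$ likes an item held in positive amount by an agent $k$ with $u_k(x_k)>u_i(x_i)$. If $u_{kj}=0$ this follows from Pareto optimality (transfer $o_j$ to $i$); if $u_{kj}=1$ it follows from leximin-optimality of $u^{\mathrm{un}}$, since a small transfer from $k$ to $i$ raises the smaller coordinate and lowers the larger one, a leximin improvement. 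By Fact~\ref{fact:1} the CEEI, MNW, and leximin solutions all realise $u^{\mathrm{un}}$, so the lemma applies to all three.

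With this lemma, rules~(\ref{6})--(\ref{8}) are handled uniformly. In the balancing operation the donors are the agents of size exceeding $1$, and they shed least-preferred items first; a donor sheds a \emph{liked} item only after exhausting her disliked ones, i.e.\ only when her utility exceeds $1$, in which case every recipient (an agent of size below $1$, hence utility below $1$) lies strictly below her and, by the lemma, dislikes the shed item. Thus every donated item is worthless to its recipient, the balancing operation is well defined in utilities despite its arbitrary choices, and it outputs the profile $(\min(u^{\mathrm{un}}_i,1))_i$. Finally I would argue $(\min(u^{\mathrm{un}}_i,1))_i=u^\ast$ by comparing the bottleneck decomposition of $u^{\mathrm{un}}$ with the EPS stages: the agents receiving fewer than one liked unit are exactly the bottleneck agents and take the same values $v_1<\dots<v_k$ under either constraint, while the remaining agents receive at least $1$ unconstrained and exactly $1$ after capping, matching EPS.

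The substantive step is rule~(\ref{1}), which I would prove in two directions. For $\text{EPS}\subseteq\text{HZ}$, take the EPS outcome and price each item of the $k$-th bottleneck set $O_k$ at $1/v_k$ and every other item at $0$. Because $v_1<\dots<v_k$, later bottleneck items are cheaper, and each $i\in X_k$ likes only items in $O_1\cup\dots\cup O_k$, so with budget $1$ her cheapest utility-maximising balanced bundle is $v_k$ units of $O_k$ (cost $1$) padded to size $1$ with free, disliked items --- exactly her EPS allocation --- while the market clears since $X_k$ demands $v_k\lvert X_k\rvert=\lvert O_k\rvert$ units of $O_k$; an agent who can secure a full liked unit buys only free liked items and, because $\mathcal{A}_b$ caps her size at $1$, receives utility exactly $1$. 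For the reverse inclusion I would use that any HZ solution is envy-free and Pareto optimal among balanced assignments (a first-welfare-theorem argument for the equal-income pseudomarket over $\mathcal{F}_b$), together with the fact that under dichotomous preferences an envy-free, balanced, Pareto-optimal-among-balanced assignment necessarily realises the egalitarian profile $u^\ast$. I expect this reverse inclusion --- pinning down the utility profile of \emph{every} HZ solution --- together with verifying the cheapest-maximal-bundle condition for the constructed prices, to be the main obstacle; the balance constraint $x_i\in\mathcal{A}_b$ is precisely what makes the free-pricing of over-demanded liked items consistent, and this is where the argument must be most careful.
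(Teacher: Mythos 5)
Much of your plan runs parallel to the paper's proof: your price construction for the direction $\text{EPS} \Rightarrow \text{HZ}$ (price $1/v_k$ for the bottleneck items $O_k$, price $0$ for leftovers, using $v_1 < v_2 < \cdots < v_k$ to rule out profitable deviations) is exactly the paper's argument; the equivalences among rules~\ref{2}, \ref{3}, \ref{4}, \ref{5} are cited to the same sources; and your structural lemma about donated items --- that in a Pareto optimal assignment realising the unconstrained leximin profile, no agent likes an item held by a strictly better-off agent, so the balancing operation is utility-invariant and outputs $\bigl(\min(u^{\mathrm{un}}_i,1)\bigr)_i$ --- is a correct and somewhat more careful packaging of what the paper argues informally when relating rule~\ref{2} to rule~\ref{6}. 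All of that is sound.

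The genuine gap is the step you yourself flag as the main obstacle: the claim that under 1-0 utilities every envy-free, balanced assignment that is Pareto optimal among balanced assignments realises the egalitarian profile $u^\ast$. This claim is false, so the reverse inclusion for the HZ rule cannot be closed this way. Counterexample: four agents and four items with $D_1 = \{o_1\}$, $D_2 = \{o_1,o_2\}$, $D_3 = \{o_2\}$, $D_4 = \{o_3,o_4\}$. The egalitarian (EPS) profile is $(\nicefrac{2}{3}, \nicefrac{2}{3}, \nicefrac{2}{3}, 1)$, with $X_1=\{1,2,3\}$ a bottleneck of ratio $\nicefrac{2}{3}$. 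Now consider the balanced assignment in which agent $1$ gets $\nicefrac{1}{2}$ of $o_1$ plus $\nicefrac{1}{2}$ of $o_3$, agent $2$ gets $\nicefrac{1}{2}$ of $o_1$ plus $\nicefrac{1}{2}$ of $o_2$, agent $3$ gets $\nicefrac{1}{2}$ of $o_2$ plus $\nicefrac{1}{2}$ of $o_3$, and agent $4$ gets all of $o_4$. Its utility profile is $(\nicefrac{1}{2}, 1, \nicefrac{1}{2}, 1)$. It is envy-free: agents $1$ and $3$ value agent $2$'s bundle at exactly $\nicefrac{1}{2}$, their own utility, and every other cross-valuation is at most the agent's own utility. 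It is also Pareto optimal among balanced assignments: any dominating balanced assignment must give agent $2$ a full unit of $\{o_1,o_2\}$, agent $1$ at least $\nicefrac{1}{2}$ of $o_1$, and agent $3$ at least $\nicefrac{1}{2}$ of $o_2$, which exhausts the supply of $\{o_1,o_2\}$ exactly, while agents $2$ and $4$ are capped at utility $1$ by balancedness --- so no agent can strictly improve. Hence EF plus PO-among-balanced does not pin down $u^\ast$. (This assignment is of course not an HZ solution: supporting it would force $p_1 \geq 2$, hence $p_2 = 0$, at which point agent $3$ would demand a full unit of $o_2$. That is precisely the point --- the equilibrium prices carry information beyond envy-freeness and efficiency, and any proof of this direction must use them.) The paper closes this direction by a price argument, not an axiomatic one: in an HZ solution, zero-utility items that are consumed have price zero, so the same price vector also clears the \emph{unconstrained} market (removing the size constraint only lets utility-$1$ agents absorb free items); therefore the HZ solution is, in utilities, the balanced cap of a CEEI solution, and Fact~\ref{fact:1} ties that profile to leximin. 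Substituting this argument for your envy-freeness step would repair the proof; the rest of your structure can stand.
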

				\begin{proof}
					\noindent
					\begin{itemize}
						\item[] 
						\ref{2} $\implies$ \ref{1}:

						Let the EPS outcome be $x$.
						We compute the prices $p$ of the items such that 
						$x_i\in \{x_i'\in \mathcal{A}_b \midd x_i'\in \text{argmax}\{u_i(x_i'):\sum_{o_j\in O}x_{ij}'\cdot(p_j)\leq 1\}$.
						Consider the run of EPS on dichotomous preferences. When a set of agents $X_k\subseteq N$ in EPS becomes a bottleneck set and each agent gets utility $v$, we can set the the individual prices of the goods allocated to agents in $X_k$ to $p_j=1/v_k$ for all $o_j$ allocated to the agents in $X_k$. Each agent in $X_k$ at this point gets an allocation that does not exceed size constraints. Moreover, it gets total utility $v_k$ for items each of which cost $1/v_k$. An agent $i\in X_k$ does not like any items after items $O_k$ are removed from $O'$. It may most prefer items that were removed before $O_k$ were removed. However, those items have even higher prices because the $v$ value progressively becomes more with the next bottleneck set and hence the prices keep going lower. Therefore for each agent in $X_k$, the utility $v_k$ is the maximum utility that can be achieved if $i\in N$ was to buy liked items at their prices according to $p$. For any extraneous items that are allocated in Step~\ref{step:giveback}, they can get price $0$.
						We have proved that \ref{2} implies \ref{1}.

						\ref{1} $\implies$ \ref{8} 
				
						Consider an HZ solution. Note that no agent pays anything for a zero utility item because if it did, it can get a bit more of a one utility item. 
						Therefore, if an agent gets a zero utility item, its price is zero. 
			We show that there is a CEEI solution under the same prices. 				
			Consider all the agents who get utility less than 1 in an HZ solution. For such agents, the size constraint does not impact in specifying their demand set (best possible feasible allocations within the budget). Now consider the agents who get utility 1 in the HZ solution. Their demand set changes when the size constraints are removed because they can avail an additional amount of items. In the HZ solution, 
			all such additional items are given to agents with utility less than one to ensure than every agent has total amount one. Hence, these additional items have zero price. We claim that there is a CEEI outcome under the same prices. Each agent who gets utility less than one in the HZ outcome clearly maximizes her utility even if the size constraints are removed. The only agents who can get more utility under the CEEI outcomes are the ones who got utility one in the HZ solution but can benefit from zero price items. Therefore the market clears under the same prices even if there are no size constraints. From Fact~\ref{fact:1}, it follows all agents that get utility less than 1 in an HZ solution, will get the same utility in a CEEI/leximin/MNW solution. As for agents who get utility 1 in an HZ solution, they cannot get any more utility in a balanced assignment. 
				

						\item[] \ref{2} $\iff$  \ref{3} $\iff$ \ref{4} $\iff$ \ref{5}:
			
								The EPS rule of \citet{KaSe06a} returns a balanced random assignment. Under dichotomous preferences, the EPS rule has a direct connection with the egalitarian rule proposed by \citet{BoMo04a} for two-sided matching problem with equal number of men and women with dichotomous preferences. If men are treated as items who are completely indifferent among women, then the setting studied by \citet{BoMo04a} reduces to the random assignment problem with one side having dichotomous preferences and the EPS rule coincides with the egalitarian rule of \citet{BoMo04a}. The egalitarian rule of \citet{BoMo04a} is equivalent to the leximin rule on the set of balanced assignments. \citet{BoMo04a} also note that their egalitarian rule is equivalent to the MNW rule applied to the set of balanced assignments. Thus the equivalence between \ref{2}, \ref{3}, \ref{4}, and \ref{5} follows from the papers of \citet{BoMo04a} and  \citet{KaSe06a}.
						\item[] \ref{6} $\iff$ \ref{7} $\iff$ \ref{8}:
			
							We also know from Fact~\ref{fact:1} that under 1-0 utilities, MNW, CEEI, and leximin coincide. Therefore, \ref{6}, \ref{7}, and \ref{8} are equivalent. 
			
							\item[] \ref{2} $\iff$ \ref{6}:
				
							Consider an assignment $x$ that is balanced and leximin among balanced assignments. Suppose it is not globally leximin. Then observe how a leximin assignment among balanced assignments is achieved during the run of the EPS algorithm. For all agents in bottleneck sets who get utility less than 1, their utilities are exactly the same as they would get in a globally leximin random assignment. The reason $x$ is not a globally leximin assignment is that the last set of agents who exit the market get utility 1 but some of them could have got utility strictly more than 1 (without decreasing the utility of other agents) if the balancedness condition is not imposed. These items are not additionally allocated to the agents who already have utility 1 and these items are only distributed in Step~\ref{step:giveback} of Algorithm~\ref{algo:EPS}. It follows that assignment $x$ can be achieved by first computing a balanced assignment that is leximin and then implementing the balancing operation on it. 
					\end{itemize}

				This completes the proof. 
								\end{proof}
	
			%
			%
			%
			%

				A corollary of the theorem above is the following one. 
	
				\begin{corollary}
					For all HZ solutions under 1-0 utilities, each agent gets the same utility in all the solutions. 
					\end{corollary}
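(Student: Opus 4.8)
The plan is to read the corollary off Theorem~\ref{th:charac} together with the uniqueness of the leximin utility profile. By the theorem, under 1-0 utilities the HZ rule is equivalent to leximin applied to the set of balanced assignments (the equivalence \ref{1} $\iff$ \ref{4}). Since two rules are called equivalent precisely when they give the agents the same utilities, it suffices to establish that leximin over balanced assignments pins down a single utility vector; every HZ solution must then induce exactly that vector, giving the claim.

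First I would record the two structural facts that make leximin well-behaved here: the feasible set $\mathcal{F}_b$ of balanced fractional assignments is convex, and each agent's utility $u_i(x_i)=\sum_{o_j\in O}x_{ij}u_{ij}$ is linear in $x$. This is the same convex, linear setting in which the uniqueness of the leximin profile is invoked for the unconstrained rules in the discussion following Fact~\ref{fact:1}; the only thing to check is that the argument is unaffected by restricting to balanced assignments.

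Next I would run that argument in the present setting. Suppose, for contradiction, that two balanced assignments $x$ and $y$ are both leximin optimal among $\mathcal{F}_b$ yet induce different utility vectors $(u_i(x_i))_{i\in N}\neq(u_i(y_i))_{i\in N}$. Leximin optimality forces their sorted utility vectors to coincide, so the two vectors are distinct rearrangements of a common sorted profile $s$. The midpoint $z=\tfrac12(x+y)$ lies in $\mathcal{F}_b$ by convexity, and by linearity $u_i(z_i)=\tfrac12\bigl(u_i(x_i)+u_i(y_i)\bigr)$ for every $i$. Averaging two distinct rearrangements of $s$ produces a utility vector that is strictly majorized by $s$, and strict majorization implies that the sorted vector of $z$ strictly leximin-dominates $s$, contradicting the optimality of $x$ and $y$. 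Hence the leximin-optimal utility vector over $\mathcal{F}_b$ is unique, agent by agent.

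Combining these pieces, every HZ solution is, by Theorem~\ref{th:charac}, a solution whose utility profile agrees with the unique leximin-on-balanced profile, so all HZ solutions assign each agent the same utility. I expect the only real content to be the verification in the previous paragraph that the midpoint/majorization argument survives the balancedness constraint; everything else is immediate from the theorem and the definition of rule equivalence.
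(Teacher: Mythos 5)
Your proof is correct, but it follows a genuinely different route from the paper's. The paper reads the corollary off item~\ref{8} of Theorem~\ref{th:charac}: under 1-0 utilities the HZ rule is equivalent to applying the balancing operation to a CEEI solution, and the fact that each agent's utility is invariant across all CEEI solutions is then imported from the market-equilibrium literature \citep{Vazi07a} rather than proved. You instead go through item~\ref{4} (leximin over the set of balanced assignments) and establish the required uniqueness from scratch: exploiting convexity of $\mathcal{F}_b$ and linearity of utilities, you show that two leximin-optimal balanced assignments with distinct utility vectors must be distinct rearrangements of one sorted profile, and that their midpoint would then strictly leximin-dominate both. Your majorization step is only sketched but is sound: the averaged vector is majorized by the common sorted profile (it is a doubly stochastic mixture of rearrangements of it), it cannot itself be a rearrangement (the sum of squares strictly decreases under genuine averaging), and majorization with equal totals forces the first differing coordinate of the sorted averaged vector to be strictly larger, i.e., a strict leximin improvement. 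What your route buys is self-containment: you avoid both the external appeal to CEEI uniqueness and an unstated step in the paper's argument, namely that the balancing operation transforms the (unique) CEEI utility profile deterministically into $(\min(u_i,1))_{i\in N}$. What it costs is that you must verify the leximin-uniqueness-over-a-convex-set argument yourself, a fact the paper only asserts (for the unconstrained rules) in the discussion after Fact~\ref{fact:1}; your check that this argument survives the balancedness constraint is the real content of your proof, and it goes through.
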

					\begin{proof}
						We proved that under 1-0 utilities, the HZ solution is equivalent to applying the balancing operation to a CEEI solution. The utilities of each agent are invariant under all CEEI solutions. Hence, it follows that the utilities of each agent are invariant under all HZ solutions~\citep{Vazi07a}. 
						\end{proof}
 
			 Next, we present the following theorem for the case of bi-valued utilities.
	
				\begin{theorem}\label{th:th2}
					Under bi-valued utilities, the HZ rule is equivalent to
			\begin{enumerate}
				\item EPS rule
			\item  Egalitarian rule of \citet{BoMo04a} applied with respect to the binary-reduced utilities
					\item  Leximin applied with respect to the binary-reduced utilities to the set of balanced assignments
					\item  MNW applied with respect to the binary-reduced utilities to the set of balanced assignments
					\item The rule that applies the balancing operation to a leximin solution (with respect to the binary-reduced utilities)
					\item  The rule that applies the balancing operation to a MNW solution (with respect to the binary-reduced utilities) 
					\item  The rule that applies the balancing operation to a CEEI solution (with respect to the binary-reduced utilities).
					\end{enumerate}
					\end{theorem}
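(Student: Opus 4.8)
The plan is to reduce Theorem~\ref{th:th2} to Theorem~\ref{th:charac} by establishing a single fact: under bi-valued utilities the HZ rule produces exactly the same solutions as the HZ rule run on the binary-reduced utilities. Since binary-reduced utilities are genuine 1-0 utilities, Theorem~\ref{th:charac} then supplies all seven equivalences at once. The engine of the reduction is the remark recorded after Example~\ref{example:main}: because the HZ rule only ever outputs balanced allocations and bi-valued preferences are dichotomous, an agent's comparison of two balanced allocations depends only on the total mass she receives of her most-preferred items.

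Concretely, for an agent $i$ with $u_{ij}\in\{\alpha_i,\beta_i\}$ and a balanced allocation $x_i\in\mathcal{A}_b$, write $d_i(x_i)=\sum_{o_j\in D_i}x_{ij}$ for the mass of liked items. Since $\sum_{o_j\in O}x_{ij}=1$, we have $u_i(x_i)=\beta_i+(\alpha_i-\beta_i)\,d_i(x_i)$, a strictly increasing affine function of $d_i(x_i)$ because $\alpha_i-\beta_i>0$, whereas the binary-reduced utility of the same allocation is exactly $d_i(x_i)$. First I would use this to argue that, for any fixed price vector $p$, both HZ conditions are insensitive to the switch between the two utility representations. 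The demand set $\{x_i'\in\mathcal{A}_b : x_i'\in\arg\max\{u_i(x_i'):\sum_{o_j\in O}x_{ij}'p_j\le 1\}\}$ maximizes a strictly increasing function of $d_i$ over a feasible region (balancedness together with the budget constraint) that does not involve the utilities at all; hence the argmax is literally the same set of allocations under both representations. For the cheapest-maximal-share requirement, the constraint $\sum_{o_j\in O}u_{ij}y_{ij}\ge\sum_{o_j\in O}u_{ij}x_{ij}$ over $y_i\in\mathcal{A}_b$ is equivalent to $d_i(y_i)\ge d_i(x_i)$ under both representations, and the minimized cost $\sum_{o_j\in O}p_jy_{ij}$ is utility-independent; so the minimum, and hence the condition, is preserved. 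Therefore a pair $(x,p)$ is an HZ solution under the bi-valued utilities if and only if it is an HZ solution under the binary-reduced utilities.

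With this identity in hand I would finish by invoking Theorem~\ref{th:charac} on the binary-reduced instance. It states that the HZ rule there is equivalent to the EPS rule, the egalitarian rule, leximin and MNW on balanced assignments, and the three balancing-operation rules. The EPS rule depends on the profile only through the sets $D_i$, so it is literally the same algorithm on the original and the binary-reduced instance, while the remaining six rules are by construction the ``with respect to binary-reduced utilities'' versions in the statement. All of these rules output balanced assignments, and for balanced assignments the affine relation above shows that equal binary-reduced utility profiles force equal bi-valued utility profiles. Hence the common utility profile guaranteed by Theorem~\ref{th:charac} translates back into a common bi-valued profile, which is exactly the notion of equivalence used in the statement.

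The main obstacle, and the step I would treat most carefully, is the verification that both HZ conditions are decided solely by $d_i$. This rests on the fact that the argmax and the cheapest-share minimization both range over balanced allocations only; were the maximization taken over all of $\mathcal{A}$ (as in the CEEI definition), the identity $u_i(x_i)=\beta_i+(\alpha_i-\beta_i)\,d_i(x_i)$ would fail and the reduction would break. I would therefore isolate the balanced-allocation affine identity as an explicit preliminary lemma and apply it uniformly to the demand set, to the cheapest-share constraint, and to the final back-translation of utility profiles.
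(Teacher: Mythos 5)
Your proposal is correct, but it takes a genuinely different route from the paper. The paper's proof pivots on EPS: it observes that EPS is ordinal (hence produces identical output on the bi-valued and binary-reduced instances), re-runs the price-construction argument from Theorem~\ref{th:charac} to show that the EPS outcome is an HZ solution under bi-valued utilities, and then proves the converse inclusion --- that every bi-valued HZ solution is also an HZ solution under binary-reduced utilities at the same prices --- by contradiction, via an inductive market-clearing argument over the bottleneck item sets $O_1,\ldots,O_k$. You instead prove both inclusions at once, directly from the definition of the HZ solution: since every quantifier in that definition (the demand-set argmax and the cheapest-maximal-share minimum) ranges only over balanced allocations, and on $\mathcal{A}_b$ the bi-valued utility is the strictly increasing affine function $u_i(x_i)=\beta_i+(\alpha_i-\beta_i)\,d_i(x_i)$ of the binary-reduced utility $d_i(x_i)$, the set of pairs $(x,p)$ satisfying the definition is literally identical under the two representations. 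This is cleaner and more self-contained: it replaces the paper's EPS detour and inductive price argument with a one-line invariance observation, and it makes the ``same prices'' claim (which the paper proves separately) automatic. Your closing remarks are also the right ones to make explicit: the argument genuinely needs the quantification over $\mathcal{A}_b$ (it would fail for CEEI, where demand ranges over $\mathcal{A}$ and the affine identity breaks), and the back-translation of the common utility profile from binary-reduced to bi-valued utilities needs all outputs to be balanced, which holds for every rule in the list. The one point both you and the paper gloss over is that the binary-reduced instance may contain items liked by no agent (when $\beta_i>0$ for all agents valuing that item), technically violating the paper's standing assumption on instances; this is harmless because EPS and the price arguments assign such items price zero and distribute them in the giveback step, but it is worth a sentence.
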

							\begin{proof}
								We know from Theorem~\ref{th:charac} that HZ and EPS are equivalent under 1-0 utilities. Note that EPS is an ordinal algorithm so it gives the same outcome under dichotomous preferences. The same argument that is used to prove that EPS gives an HZ solution under 1-0 utilities can be used verbatim to prove that EPS gives an HZ solution under bi-valued utilities. At any point at which a bottleneck set $X_k$ is removed, the agents in the set are able to get their most preferred items at the lowest price. Any most preferred items that are not available were sold at a higher price. As for lesser preferred items, they are given to the agents for free because they are the under-demanded items that are given price zero. 
					
								We have shown that HZ and EPS are equivalent under 1-0 utilities;
								EPS solutions are equivalent under 1-0 utilities and bi-valued utilities; and EPS under bi-valued utilities gives a HZ solution under bi-valued utilities. It follows that HZ under 1-0 utilities implies HZ under bi-valued utilities.\footnote{The proof is also an alternative argument that 	HZ under bi-valued utilities is invariant under scaling and shifting of the utility functions as their outcomes are equivalent to HZ under 1-0 utilities.} 
					
								Next, we prove that any HZ solution under bi-valued utilities is the HZ solution under binary-reduced utilities for the same item prices. Suppose there is an HZ solution under bi-valued utilities that is not an outcome of HZ under binary-reduced utilities. This means that the market does not clear under binary-reduced utilities for the same prices. For the base case, consider the items in $O_1$ in the corresponding EPS outcome. The agents $X_1$ pay nothing for the lesser preferred items. If some item in $O_1$ has a different price, then for the market to clear, at least some item in $O_1$ has lesser price. 
								Consider the item $o\in O_1$ whose price dropped the most. But then all agents in $X_1$ who most prefer $o$ want to get more of $o$ which implies that the market does not clear. The same argument works inductively for the items in $O_2,\ldots, O_k$. Hence, the market clears for binary-reduced utilities for the original prices which contradicts that the HZ solution under bi-valued utilities that is not an outcome of HZ under binary-reduced utilities.

							%
							%
							%
							%
							%

													%
								%
					
		
					The remaining equivalences follow from Theorem~\ref{th:charac} that EPS is equivalent to the rules under 1-0 utilities.
								\end{proof}
		
			
			%

				\begin{corollary}
					For all HZ solutions under bi-valued utilities, the solution is rational. For a given problem instance with bi-valued utilities, each agent gets the same utility in all the  solutions of the problem instance. 
			Under bi-valued utilities, the set of HZ solutions does not change even if the agents' utilities are shifted or scaled. 
					\end{corollary}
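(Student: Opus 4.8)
The plan is to derive all three claims from Theorem~\ref{th:th2}, which identifies the HZ rule under bi-valued utilities with the EPS rule and with the balancing operation applied to a CEEI/leximin/MNW solution computed on the binary-reduced utilities. Because each of these reference rules is either purely combinatorial or inherits invariance properties already established for $1$-$0$ utilities, the corollary reduces to transferring those properties across the equivalence. The subtlety throughout is that the equivalences in Theorem~\ref{th:th2} are stated up to equality of agents' utilities, so for the statements that concern the whole solution set I must lean on the stronger, set-level price correspondence proved inside that theorem.

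For rationality, I would appeal to the equivalence HZ $=$ EPS from Theorem~\ref{th:th2} and inspect Algorithm~\ref{algo:EPS}: each bottleneck value $v_k=|\cup_{i\in X_k}(D_i\cap O')|/|X_k|$ is a ratio of integers, the matchings and flows used to realise these guarantees are integral in structure, and the final step (Step~\ref{step:giveback}) only redistributes unallocated mass. Hence every entry $x_{ij}$ produced is rational, and since this assignment is an HZ solution the HZ solution is rational. One could equally invoke rationality of CEEI under $1$-$0$ utilities together with the balancing operation.

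For the invariance of utilities, I would first use the corollary following Theorem~\ref{th:charac} --- under binary-reduced ($1$-$0$) utilities every agent receives the same utility across all HZ solutions --- to conclude that the \emph{amount of liked items} $\sum_{o_j\in D_i}x_{ij}$ that agent $i$ obtains is the same in every HZ solution under the bi-valued profile. Since every HZ solution lies in $\mathcal{F}_b$ and is therefore balanced, the disliked amount $1-\sum_{o_j\in D_i}x_{ij}$ is also pinned down. The bi-valued utility then equals $\alpha_i\sum_{o_j\in D_i}x_{ij}+\beta_i\bigl(1-\sum_{o_j\in D_i}x_{ij}\bigr)$, which is common to all solutions, giving the second claim.

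For shift/scale invariance, the key observation is that the binary reduction of a bi-valued profile depends only on the dichotomy $\{D_i\}_{i\in N}$, i.e.\ on which items each agent most prefers, and any positive affine transformation $u_i\mapsto c_iu_i+d_i$ (with $c_i>0$, preserving $\alpha_i>\beta_i$ and hence each $D_i$) leaves the binary-reduced profile unchanged. The proof of Theorem~\ref{th:th2} shows that an assignment is an HZ solution under a bi-valued profile precisely when it is an HZ solution, at the same prices, under the associated binary-reduced profile. Applying this to the original profile and to any shifted or scaled profile --- both of which share the same binary reduction --- shows the two HZ solution sets coincide. I expect this last step to require the most care, since it must use the set-level correspondence between bi-valued and binary-reduced HZ solutions, not merely the utility-level equivalence.
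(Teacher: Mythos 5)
Your proposal is correct and takes essentially the same route as the paper: both derive all three claims by transferring known invariance and rationality properties across the equivalences of Theorem~\ref{th:th2} (the paper's one-line proof cites the well-known facts that CEEI on the binary-reduced utilities is rational with unique utilities and prices, then applies the balancing-operation equivalence). Your substitutions --- EPS integrality for rationality, the 1-0 HZ corollary plus balancedness arithmetic for utility invariance, and the same-prices correspondence for shift/scale invariance --- are just alternative instantiations of that same transfer, and your explicit attention to the set-level (rather than merely utility-level) correspondence from the proof of Theorem~\ref{th:th2} is a point the paper's own proof glosses over but implicitly relies on.
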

					\begin{proof}
						We have proved that HZ solutions under bi-valued utilities is equivalent to applying the balancing operation to a CEEI solution (with respect to the binary-reduced utilities). It is well-known that the CEEI solution is always rational and, for the problem instance, each agent gets the same utility in all the solutions of the instance, and each item gets the same price in all the solutions of the instance. 
						\end{proof}

			
			The theorem above also has several algorithmic consequences. 
			Firstly, in order to compute the HZ solution for bi-valued utilities, one only needs to consider the underlying dichotomous preferences and run the EPS algorithm. The characterizations combined with the EPS algorithm provide an alternative route to proving that the HZ solution can be computed in strongly polynomial time. 

			\begin{corollary}
				Under bi-valued utilities, the HZ solution can be computed in time $O(n^3\log n)$.
				\end{corollary}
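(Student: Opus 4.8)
The plan is to reduce the entire computation to a single execution of the EPS algorithm and then quote its known running time. First I would appeal to Theorem~\ref{th:th2}, which establishes that under bi-valued utilities the HZ rule is equivalent to the EPS rule; hence it suffices to compute the EPS outcome and return it as an HZ solution. Because EPS is an ordinal algorithm, it depends only on the underlying dichotomous preferences, that is, on the sets $D_i$ of most-preferred items. These sets can be read off from the bi-valued utility profile in $O(n^2)$ time by comparing each $u_{ij}$ to $\alpha_i$, so this preprocessing is dominated by the rest of the computation.

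Next I would account for the cost of Algorithm~\ref{algo:EPS} itself. The key observation is that the main loop executes at most $n$ times: each pass through the else-branch removes a nonempty bottleneck set $X_k$ from $N'$, and the if-branch assigns every remaining agent a full unit of liked items and thereby empties $N'$. Within each iteration the dominant cost is computing the bottleneck value $v_k=\min_{C\subseteq N'}|\cup_{i\in C}(D_i\cap O')|/|C|$ together with the maximal minimizer $X_k$. As \citet{KaSe06a} show, this can be carried out through a sequence of maximum-flow computations on a bipartite network of agents and items, and the resulting bound for EPS under dichotomous preferences is $O(n^3\log n)$, the $\log n$ factor arising from locating the minimum-ratio cut via the flow subroutine.

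Finally I would handle the closing balancing phase (Step~\ref{step:giveback}), in which any under-demanded items are distributed to agents whose allocation has size below one. This merely fills each deficient allocation up to size one and costs $O(n^2)$, so it does not affect the overall bound. Combining the $O(n^2)$ preprocessing, the $O(n^3\log n)$ EPS run, and the $O(n^2)$ balancing step yields the claimed $O(n^3\log n)$ time. I do not expect a genuine obstacle here: the substantive work lies in the equivalence of Theorem~\ref{th:th2}, and the corollary follows simply by invoking the established EPS complexity. The only point that must be checked is that the ordinal reduction to dichotomous preferences is legitimate, which is precisely what Theorem~\ref{th:th2} guarantees.
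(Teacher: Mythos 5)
Your proof is correct and follows the same route as the paper: invoke the equivalence of HZ and EPS under bi-valued utilities (Theorem~\ref{th:th2}) and then quote the $O(n^3\log n)$ running time of EPS for dichotomous preferences from \citet{KaSe06a}. The paper's own proof is just this two-line reduction; your additional accounting of the preprocessing, the loop bound, and the balancing step is sound elaboration rather than a different argument.
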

				\begin{proof}
					Computing the HZ solution reduces to computing the outcome of the EPS algorithm. EPS runs in $O(n^3\log n)$ time.
					\end{proof}

			Another algorithmic consequence is a reduction from HZ to CEEI in the case of bi-valued utilities. 

			\begin{corollary}
				There is a linear-time reduction from computing the HZ solution under bi-valued utilities to computing the MNW / CEEI / leximin solution under 1-0 utilities. 
				\end{corollary}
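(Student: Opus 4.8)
The plan is to realize the reduction as three phases---a linear-time preprocessing step that builds the binary-reduced instance, a single call to the oracle for MNW/CEEI/leximin under 1-0 utilities, and a linear-time postprocessing step that applies the balancing operation---and then to invoke Theorem~\ref{th:th2} to certify that the composed output is a valid HZ solution of the original bi-valued instance. Correctness is essentially inherited from the already-established equivalences, so the only genuine work lies in the complexity bookkeeping of the two transformation steps.

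For the preprocessing, I would take the input $n\times n$ matrix with entries in $\{\alpha_i,\beta_i\}$ and produce the binary-reduced matrix in a single pass, mapping $\alpha_i\mapsto 1$ and $\beta_i\mapsto 0$ row by row; this touches each of the $\Theta(n^2)$ entries once and hence is linear in the input size. By Fact~\ref{fact:1}, on a 1-0 instance the MNW, CEEI, and leximin rules all coincide, so a single invocation of any one of these oracles suffices, returning a solution $x$ (possibly unbalanced).

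For the postprocessing, I would apply the balancing operation to $x$. First compute the row sums in $O(n^2)$ time to identify $N^+=\{i\in N\midd \sum_{o_j\in O}x_{ij}>1\}$ and $N^-=\{i\in N\midd \sum_{o_j\in O}x_{ij}<1\}$. Under dichotomous preferences the ``least preferred'' items of an over-allocated agent are exactly her zero-utility items, so no sorting is required: separating zero-utility entries from positive-utility entries is $O(n)$ per row, and the freed mass is then redistributed to the agents of $N^-$ greedily by sweeping a pointer through the pool of freed item-fractions and through the deficient agents. Since the assignment has at most $n^2$ nonzero entries, this redistribution touches $O(n^2)$ entries, so the balancing operation runs in linear time. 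By Theorem~\ref{th:th2}, the resulting balanced assignment is an HZ solution of the original bi-valued instance.

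The main obstacle is purely the complexity accounting of the balancing step, since correctness is already delivered by Theorem~\ref{th:th2}: one must check that identifying the least-preferred items and clearing the deficits can both be carried out without sorting and within $O(n^2)$ time. Because ``least preferred'' collapses to ``zero utility'' under dichotomous preferences, this obstacle dissolves, and the total running time of the reduction (exclusive of the oracle call) is $O(n^2)$, i.e.\ linear in the size of the utility matrix.
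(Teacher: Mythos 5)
Your proposal is correct and matches the paper's own argument: the paper's proof simply points to Algorithm~\ref{algo:reduction}, which consists of exactly your three phases (binary-reduction, one oracle call to unconstrained MNW/CEEI/leximin, balancing operation), with correctness inherited from Theorem~\ref{th:th2}. Your additional complexity bookkeeping is a harmless elaboration the paper leaves implicit (note only that under dichotomous preferences an over-allocated agent may also need to shed liked items, not just zero-utility ones, but with two utility levels this still needs no sorting).
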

				\begin{proof}
					The reduction is specified as Algorithm~\ref{algo:reduction}.
					\end{proof}
	
	Our approach for turning HZ into a leximin problem that is oblivious to tracking prices can be seen as conceptually simpler.


						\begin{algorithm}[h!]
								\caption{Reduction from HZ for bi-valued utilities to MNW/CEEI/leximin}
								\label{algo:reduction}
									\normalsize
							\begin{algorithmic}

						\REQUIRE $(N,O,u)$ where $u$ is bi-valued
						\ENSURE	A balanced assignment 								
					\end{algorithmic}
						\begin{algorithmic}[1]
							\normalsize
						\STATE Turn the bi-valued utilities $u$ to binary reduced utilities $u'$.
						\STATE Apply an algorithm for the (unconstrained) MNW/CEEI/lexi-min to utilities $u'$ to compute a solution $x$. 
						\STATE Apply the balancing operation on $x$ to derive the HZ outcome.
 									
							\RETURN $x$.
										
							\end{algorithmic}
						\end{algorithm}
															
						We have already pointed out that in EPS, an agent who gets one unit of most-preferred items cannot get more even though he may be the only agent liking the additional item. Due to the adherence to the size constraints, the HZ solution may not be ex-ante Pareto optimal (PO) among the set of \emph{all} assignments if we assume that agents have additive utilities. 
		Assuming, we ask the agents to get the cheapest demand set, the HZ solution is PO among the set of balanced assignments. It may not be PO among the set of all assignments if we assume that agents have additive utilities.				
					


								We recall that CEEI is equivalent to the MNW rule. However, the HZ rule is not equivalent to maximizing Nash welfare while imposing equal size constraints. This is evident from  Example~\ref{example:main} that showed that applying MNW to balanced assignments may give highly unfair assignments.

			\section{Strategyproofness}

			For all the rules that we have considered, we discuss 
			issues around strategyproofness and group strategyproofness. 
			A rule is \emph{strategyproof} if no agent can misreport her preferences to get higher utility.
			A rule is ex-ante \emph{group}    
			\emph{strategyproof} if no group of agents can misreport their preferences so that all agents in the group get at least as much utility and at least one agent in the group gets strictly more utility.
 
			For dichotomous preferences, EPS is ex-ante group-strategyproof. This fact has been shown before as well (see, e.g. Theorem 1 of \citet{KaSe06a} who refer to the argument by \citet{BoMo04a}).
			Group-strategyproofness for EPS is established by induction on the bottleneck sets created: it can be proved that no agent in a bottleneck set will be a member of a manipulating coalition. 
			Hence, it follows that the HZ solution is also group-strategyproof under bi-valued utilities.\footnote{The HZ rule is not strategyproof for general utilities. See, for example, further discussion by \citet{ACGH20a}.}

			\begin{corollary}
				Under bi-valued utilities, the HZ rule is group-strategyproof.
				\end{corollary}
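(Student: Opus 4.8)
The plan is to derive the corollary directly from the equivalences already established in Theorem~\ref{th:th2} together with the group-strategyproofness of EPS, so that almost no new work is required. The key observation is that Theorem~\ref{th:th2} shows the HZ rule under bi-valued utilities is \emph{equivalent} (in the sense of inducing the same agent utilities) to the EPS rule, and the discussion preceding this corollary has already recalled that EPS is ex-ante group-strategyproof for dichotomous preferences. Since EPS is an ordinal algorithm depending only on the underlying dichotomous preferences, reporting a bi-valued utility function is, from the mechanism's point of view, nothing more than reporting the induced dichotomous preference (which item sets are most preferred). Thus a manipulation under HZ corresponds to a manipulation under EPS on the dichotomous preferences.

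First I would state that by Theorem~\ref{th:th2} the HZ rule and the EPS rule produce the same utility profile for every bi-valued instance, and that both rules see only the binary-reduced (dichotomous) preferences as input. Next I would invoke the established fact that EPS is ex-ante group-strategyproof under dichotomous preferences, proved by induction on the bottleneck sets: no agent who is resolved in a bottleneck set can strictly benefit within a deviating coalition, since such an agent already receives her guaranteed egalitarian utility and any coalition that changes the later stages cannot raise this value without lowering some coalition member's utility. Then I would argue that a profitable group manipulation for HZ would translate into a profitable group manipulation for EPS: if a coalition could misreport their bi-valued utilities to weakly improve everyone and strictly improve someone under HZ, then since HZ agrees with EPS on utilities and EPS depends only on the reported dichotomous preferences, the corresponding dichotomous misreports would yield the same improved utilities under EPS, contradicting the group-strategyproofness of EPS.

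The one point requiring a little care — and the step I expect to be the main obstacle — is reconciling the notion of ``utility'' used in the two strategyproofness statements. Group-strategyproofness of EPS is typically phrased in terms of the ordinal/stochastic-dominance improvement under dichotomous preferences, whereas the HZ statement is about cardinal expected utility under the true bi-valued function. I would resolve this by appealing to the remark in the paper that, because all HZ outcomes are balanced and preferences are dichotomous, an agent's preference over allocations depends only on the stochastic dominance relation over outcomes; consequently, for an agent with true bi-valued utility a weak (resp.\ strict) cardinal improvement is equivalent to a weak (resp.\ strict) improvement in the expected amount of most-preferred items, which is exactly the quantity EPS group-strategyproofness controls. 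With this alignment in place, any coalition that profits under HZ would also profit under EPS in the dichotomous sense, which is impossible.

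Assembling these pieces, the proof is short: HZ $\equiv$ EPS in utilities (Theorem~\ref{th:th2}); EPS reads only dichotomous preferences and is ex-ante group-strategyproof; cardinal improvements under bi-valued utilities coincide with the dichotomous improvements EPS is immune to; hence no coalition can profitably manipulate HZ, establishing that the HZ rule is group-strategyproof under bi-valued utilities.
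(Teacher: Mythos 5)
Your proposal is correct and follows essentially the same route as the paper: invoke the equivalence of HZ and EPS under bi-valued utilities (Theorem~\ref{th:th2}) together with the known ex-ante group-strategyproofness of EPS for dichotomous preferences, so that any profitable coalition deviation under HZ would contradict EPS's immunity. Your extra care in aligning cardinal improvements under bi-valued utilities with stochastic-dominance improvements on balanced outcomes is a detail the paper leaves implicit (via its remark following Example~\ref{example:main}), but it is the same argument.
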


			For 1-0 utilities, all the rules leximin/MNW/CEEI that may return unbalanced assignments are group-strategyproof as well. See for example Theorem 3 of \citet{AzYe14a} that shows that leximin/MNW/CEEI are group-strategyproof. Within the class of bi-valued utilities, it is clear that leximin is not strategyproof or envy-free: 
			an agent with scaled-down utilities gets predominant importance under the leximin rule. On the other hand, scaling down of utilities has no effect in the case of MNW (equivalently CEEI). Despite resistance to manipulation by scaling, we show that MNW is not strategyproof under bi-valued utilities even if the underlying ordinal preferences remain unchanged. Previously, manipulations of the MNW and CEEI under general additive utilities have been considered in many papers~\citep{BCD+1a,CDT+16a}.

			\begin{theorem}\label{CEEI_SP}
				Under bi-valued utilities, MNW/CEEI is not strategyproof even if the underlying ordinal preferences remain unchanged. 
				\end{theorem}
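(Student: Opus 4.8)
The plan is to prove non-strategyproofness by exhibiting a single instance together with a profitable misreport that leaves the manipulator's ordinal preferences fixed. Since the statement concerns \emph{bi-valued} utilities and the text already observes that rescaling an agent's utilities does not change the MNW/CEEI outcome, the misreport cannot be a mere rescaling: it must alter the intensity ratio $\alpha_i/\beta_i$ while keeping the most-preferred set $D_i$ intact. This points directly toward an instance in which two agents compete for the same good but differ in how intensely they rank it.

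Concretely, I would take $n=2$ with both agents most preferring $o_1$: agent $1$ has true utilities $u_1=(2,1)$ and agent $2$ has $u_2=(4,1)$, so $D_1=D_2=\{o_1\}$ and agent $1$ has the smaller ratio. The first step is to compute the truthful MNW solution, which I would read off the competitive equilibrium since MNW coincides with CEEI (Fact~\ref{fact:1}). When both agents like $o_1$ and dislike $o_2$, the agent with the larger ratio spends her whole budget on $o_1$, while the agent with the smaller ratio is price-indifferent between the two goods and therefore absorbs all of the under-demanded $o_2$. Solving for prices with $p_1+p_2=2$ and $p_1/p_2$ equal to agent $1$'s ratio yields agent $1$ the bundle $(x_{11},x_{12})=(1/4,1)$ and true utility $3/2$.

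The second step is the misreport: agent $1$ announces $u_1'=(3,1)$. This keeps $o_1\succ o_2$, so the ordinal preferences and $D_1$ are unchanged, and it is visibly not a rescaling of $(2,1)$. Recomputing the equilibrium (agent $1$ still has the smaller reported ratio, $3<4$, so she remains the indifferent agent) moves the price ratio and hands her the bundle $(1/3,1)$; evaluating this at her \emph{true} utilities gives $2\cdot\tfrac13+1\cdot 1=5/3>3/2$. Hence the misreport is strictly profitable and MNW/CEEI is not strategyproof. The mechanism behind the example is monotone: among reports that keep the agent on the ``flexible'' (indifferent) side of the market, a larger reported ratio buys a larger fraction of the contested good $o_1$ while still securing all of $o_2$, so the true utility strictly increases as the reported ratio climbs toward the competitor's.

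The one step I expect to require genuine care is verifying both MNW optima rigorously: the bilinear Eisenberg-Gale product has its unconstrained stationary point outside the feasible box, so the maximizer sits on the boundary face $x_{12}=1$, and this must be checked rather than assumed. I would discharge it either by the competitive-equilibrium argument above (confirming that the market clears, both budgets bind, and each agent buys only maximal bang-per-buck goods), or, as a cross-check, by directly maximizing the product over the two boundary faces $x_{12}\in\{0,1\}$ and comparing values. Invoking the uniqueness of agents' utilities under MNW/CEEI then guarantees that the comparison $5/3>3/2$ is well defined and not an artifact of tie-breaking.
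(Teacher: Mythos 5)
Your proposal is correct, and it takes a genuinely different route from the paper. The paper exhibits a $5\times 5$ instance in which agent 1 (truthful values $(10,1,1,1,1)$) inflates her values for the \emph{lesser}-preferred items to $(10,8,8,8,8)$, thereby capturing fractions of $o_2$ and $o_4$ on top of $o_1$; it then certifies both MNW outcomes by an explicit KKT/Lagrange-multiplier computation. You instead use a minimal $2\times 2$ instance in which the manipulator inflates her value for the \emph{contested} top item (ratio $2\to 3$ against a competitor with ratio $4$), moving the equilibrium price ratio in her favour so she buys more of $o_1$ while still absorbing all of the under-demanded $o_2$; I verified your numbers: truthful prices $(4/3,2/3)$ give her bundle $(1/4,1)$ and true utility $3/2$, misreported prices $(3/2,1/2)$ give bundle $(1/3,1)$ and true utility $5/3 > 3/2$, and in both cases the Nash-welfare maximizer over the box is unique (the interior stationary point lies outside $[0,1]^2$ and the face $x_{12}=1$ strictly dominates the others), so no tie-breaking issue arises. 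Note that the two examples even exploit opposite directions of intensity misreporting --- the paper's agent shrinks her ratio $\alpha/\beta$ to boost her claim on fallback items, yours enlarges it to claim more of the contested item --- but both keep ordinal preferences fixed, as the theorem requires. What your approach buys is economy and transparency: the two-variable boundary optimization (or the equivalent market-clearing check, including the paper's ``cheapest maximal share'' proviso, which holds here since both goods give the manipulator equal bang per buck) replaces the paper's heavier gradient computations, and the monotone price-ratio mechanism makes the source of manipulability visible. What the paper's larger example buys is a richer market in which the same phenomenon appears away from the degenerate two-buyer boundary structure, and a verification template (KKT with explicit multipliers) that generalizes to instances where hand optimization is infeasible.
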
 

				\begin{proof}
					We provide an example with 5 agents and 5 items. 
					\begin{center}
							\setlength{\tabcolsep}{6pt}
							\begin{tabular}{ccccccccc}
								& $o_1$ & $o_2$& $o_3$ & $o_4$ & $o_5$\\
								\midrule
								$1$ &10&1&1&1&1\\
								$2$& 6&6&10&6&6\\
								$3$& 4&10&4&10&4\\
								$4$& 0&0&0&0&1\\ 
								$5$& 0&0&0&0&1\\
							\end{tabular}
						\end{center}
			
						Under the valuations, the MNW/CEEI outcome $x$ is as follows. 
			

					\begin{center}
									 \begin{blockarray}{ccccccccccc}
				&&\matindex{$o_1$}&\matindex{$o_2$}& \matindex{$o_3$}& \matindex{$o_4$}& \matindex{$o_5$}& \\
									 	    \begin{block}{c(cccccccccc)}
				\matindex{$1$}&&$1$& $0$&$0$&$0$&$0$&\\ 
				\matindex{$2$}&&$0$&$\nicefrac{1}{12}$&$1$&$\nicefrac{1}{12}$&$0$&\\
				\matindex{$3$}&&$0$&$\nicefrac{11}{12}$&$0$&$\nicefrac{11}{12}$&$0$&\\
				\matindex{$4$}&&$0$& $0$&$0$&$0$&$\nicefrac{1}{2}$&\\ 
				\matindex{$5$}&&$0$& $0$&$0$&$0$&$\nicefrac{1}{2}$&\\
				\end{block}
									 	  \end{blockarray}
										  \end{center}
				%
			
						Agent 1 gets utility 10. 
			
						Suppose agent 1 misreports as follows by raising her value for the lower preferred items. 
				\begin{center}
						\setlength{\tabcolsep}{6pt}
						\begin{tabular}{ccccccccc}
							& $o_1$ & $o_2$& $o_3$ & $o_4$ & $o_5$\\
							\midrule
							$1$ &10&8&8&8&8\\
							$2$& 6&6&10&6&6\\
							$3$& 4&10&4&10&4\\
							$4$& 0&0&0&0&1\\ 
							$5$& 0&0&0&0&1\\
						\end{tabular}
					\end{center}
		
							Under the misreport, the MNW/CEEI outcome $y$ is as follows. 
				
								\begin{center}
												 \begin{blockarray}{ccccccccccc}
							&&\matindex{$o_1$}&\matindex{$o_2$}& \matindex{$o_3$}& \matindex{$o_4$}& \matindex{$o_5$}& \\
												 	    \begin{block}{c(cccccccccc)}
							\matindex{$1$}&&$1$& $\nicefrac{3}{16}$&$0$&$\nicefrac{3}{16}$&$0$&\\ 
							\matindex{$2$}&&$0$&$0$&$1$&$0$&$0$&\\
							\matindex{$3$}&&$0$&$\nicefrac{13}{16}$&$0$&$\nicefrac{13}{16}$&$0$&\\
							\matindex{$4$}&&$0$& $0$&$0$&$0$&$\nicefrac{1}{2}$&\\ 
							\matindex{$5$}&&$0$& $0$&$0$&$0$&$\nicefrac{1}{2}$&\\
							\end{block}
												 	  \end{blockarray}
													  \end{center}

				The assignment $y$ gives agent 1 utility more than 10 with respect to her original utilities. Hence, MNW/CEEI is not strategyproof under bi-valued utilities. The computation of MNW solutions is via an optimisation solver. 
				An analytical proof that the computation is correct is given below. 	Let
	\[
		f_u(x) = \prod_{i \in N}\sum_{o_j \in O} u_{ij} x_{ij},
	\]
	and define constraints (for all \(i \in N\) and \(o_j \in O\))
	\begin{align*}
		c_j(x) &= \sum_{i^\prime \in N} x_{i^\prime j} - 1, &
		c_{ij}(x) &= -x_{ij}.
	\end{align*}
	Write
	\[
		u_{-i}(x) = \prod_{k \in N \setminus \{i\}}\sum_{o_j \in O} u_{k j} x_{k j}
	\]
	for the product of utilities, ignoring agent \(i\). Then \(\partial_{(ij)} f_u = u_{ij} u_{-i}\).

	For the first example (truthful utilities), we find
	\begin{align*}
		u_{-1}(x) &= \nicefrac{605}{12}, &
		u_{-2}(x) &= \nicefrac{275}{6}, &
		u_{-3}(x) &= \nicefrac{55}{2}, &
		u_{-4}(x) = u_{-5}(x) &= \nicefrac{3025}{3},
	\end{align*}
	from which
	\[
		\nabla f_u(x) = \begin{pmatrix}
			\nicefrac{3025}{6} & \nicefrac{605}{12} & \nicefrac{605}{12} & \nicefrac{605}{12} & \nicefrac{605}{12} \\
			275 & 275 & \nicefrac{1375}{3} & 275 & 275 \\
			110 & 275 & 110 & 275 & 110 \\
			0 & 0 & 0 & 0 & \nicefrac{3025}{3} \\
			0 & 0 & 0 & 0 & \nicefrac{3025}{3}
		\end{pmatrix}.
	\]
	The gradient is a vector, but we write it in matrix form so we can directly relate each \(x_{ij}\) partial with the corresponding matrix entry. To find Lagrange multipliers and apply the KKT conditions, we only need to find the gradients of the active constraints. \(c_j\) for \(1 \leq j \leq 5\) are active, so we find \([\nabla c_j(x)]_{ki} = \mathbb{I}_{k = j}\). That is, \(\nabla c_j(x)\) is zero everywhere, except for every entry in column \(j\), which are all \(1\). Then, we can calculate part of the Lagrangian,
	\begin{alignat*}{2}
		M &=&&\nabla f_u(x) - \nicefrac{3025}{6} \nabla c_1(x) - 275 \nabla c_2(x) \\
		&&&- \nicefrac{1375}{3} \nabla c_3(x) - 275 \nabla c_4(x) - \nicefrac{3025}{3} \nabla c_5(x) \\
		&=&&\begin{pmatrix}
			0 & -\nicefrac{2695}{12} & -\nicefrac{4895}{12} & -\nicefrac{2695}{12} & -\nicefrac{11495}{12} \\
			-\nicefrac{1375}{6} & 0 & 0 & 0 & -\nicefrac{2200}{3} \\
			-\nicefrac{2365}{6} & 0 & -\nicefrac{1045}{3} & 0 & -\nicefrac{2695}{3} \\
			-\nicefrac{3025}{6} & -275 & -\nicefrac{1375}{3} & -275 & 0 \\
			-\nicefrac{3025}{6} & -275 & -\nicefrac{1375}{3} & -275 & 0
		\end{pmatrix}.
	\end{alignat*}
	Now the remaining active constraints are precisely those \(c_{ij}\) with corresponding negative entries in this matrix, and since \(\nabla c_{ij} = -\mathbf{e}_{ij}\), we can choose Lagrange multipliers for these extra gradients such that the Lagrangian gradient is zero, and these multipliers are negative. Specifically, the Lagrange multiplier for constraint \(c_{ij}\) is exactly \(M_{ij}\). Since \(f_u\) is concave over the constrained set, which is convex, we have a concave optimisation problem. All Lagrangian multipliers of active inequality constraints are negative, so we satisfy the KKT conditions, and can conclude that \(x\) is a global maximum of \(f_u\).

	For the second example (misreported utilities), we find
	\begin{align*}
		u_{-1}(y) &= \nicefrac{325}{8}, &
		u_{-2}(y) &= \nicefrac{845}{16}, &
		u_{-3}(y) &= \nicefrac{65}{2}, &
		u_{-4}(y) = u_{-5}(y) &= \nicefrac{4225}{4},
	\end{align*}
	so that
	\[
		\nabla f_u(y) = \begin{pmatrix}
			\nicefrac{1625}{4} & 325 & 325 & 325 & 325 \\
			\nicefrac{2535}{8} & \nicefrac{2535}{8} & \nicefrac{4225}{8} & \nicefrac{2535}{8} & \nicefrac{2535}{8} \\
			130 & 325 & 130 & 325 & 130 \\
			0 & 0 & 0 & 0 & \nicefrac{4225}{4} \\
			0 & 0 & 0 & 0 & \nicefrac{4225}{4}
		\end{pmatrix}.
	\]
	Then
	\begin{alignat*}{2}
		M &=&&\nabla f_u(y) - \nicefrac{1625}{4} \nabla c_1(y) - 325 \nabla c_2(y) \\
		&&&- \nicefrac{4225}{8} \nabla c_3(y) - 325 \nabla c_4(y) - \nicefrac{4225}{4} \nabla c_5(y) \\
		&=&&\begin{pmatrix}
			0 & 0 & -\nicefrac{1625}{8} & 0 & -\nicefrac{2925}{4} \\
			-\nicefrac{715}{8} & -\nicefrac{65}{8} & 0 & -\nicefrac{65}{8} & -\nicefrac{5915}{8} \\
			-\nicefrac{1105}{4} & 0 & -\nicefrac{3185}{8} & 0 & -\nicefrac{3705}{4} \\
			-\nicefrac{1625}{4} & -325 & -\nicefrac{4225}{8} & -325 & 0 \\
			-\nicefrac{1625}{4} & -325 & -\nicefrac{4225}{8} & -325 & 0
		\end{pmatrix}.
	\end{alignat*}
	Similar analysis gives that \(y\) is a global maximiser of \(f_u\).
				\end{proof}

					The theorem above can be recast in the context of indivisible goods to state that for bi-valued utilities, the MNW rule of \citet{CKM+16a} is not strategyproof for any tie-breaking over the set of possible outcomes. The MNW rule of \citet{CKM+16a} for indivisible goods coincides with MNW for divisible goods if the goods are made arbitrarily small. 
		
					\begin{corollary}
				Under bi-valued utilities, and for indivisible goods, the MNW rule of \citet{CKM+16a} is not strategyproof for any tie-breaking over the set of possible outcomes. 
						\end{corollary}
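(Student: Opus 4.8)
The plan is to leverage Theorem~\ref{CEEI_SP} together with the stated fact that the indivisible MNW rule of \citet{CKM+16a} coincides with the divisible MNW rule once each good is split into arbitrarily many copies. Concretely, I would take the two $5$-agent, $5$-item bi-valued instances from the proof of Theorem~\ref{CEEI_SP} (the truthful profile $u$ and agent $1$'s misreport $\hat u$) and, for a parameter $k$, replace each item $o_j$ by $k$ identical copies, each valued by agent $i$ at $u_{ij}/k$. This keeps every agent's utilities bi-valued (the two values are merely rescaled) and leaves the ordinal preferences unchanged, so the family of split instances is a legitimate family of bi-valued indivisible-goods instances. An integral allocation of the $5k$ copies corresponds exactly to a fractional assignment of the original five goods with all entries multiples of $1/k$, and it induces the same utilities; hence the achievable utility profiles of the split instance are precisely the denominator-$k$ points of the original fractional polytope.

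Next I would establish convergence. Let $W^\star$ denote the optimal divisible Nash product and $W_k$ the optimal integral Nash product on the $k$-split instance. Rounding the divisible optimum onto the denominator-$k$ grid shows $W_k \to W^\star$, while $W_k \le W^\star$ always holds. Because $\log$ of the Nash product is strictly concave on the convex set of achievable utility profiles, the divisible optimum is unique in utility space; a standard compactness argument then forces the utility profile of every integral MNW optimum of the $k$-split instance to converge to this unique profile as $k \to \infty$, uniformly over the set of optima. This is exactly what is needed to neutralise tie-breaking: for large $k$, no matter which optimal integral allocation the tie-breaking rule selects, the induced utility profile lies within any prescribed $\varepsilon$ of the divisible optimum. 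I would also check that in both the truthful and the misreported divisible optima every agent (in particular agents $4$ and $5$, who can each be served a share of $o_5$) receives strictly positive utility; this guarantees that for large $k$ the rule of \citet{CKM+16a} operates in its ``all agents positive'' regime, so its first phase of maximising the number of positively served agents is vacuous and it genuinely maximises the product.

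Finally I would transfer the manipulation. Under truthful reports, agent $1$'s reported utility equals her true utility, which converges to $10$; under the misreport, Theorem~\ref{CEEI_SP} gives that the divisible optimum $y$ yields agent $1$ a \emph{true} utility strictly above $10$. Fixing $\varepsilon$ smaller than half this gap and taking $k$ large, every integral MNW outcome of the truthful split instance gives agent $1$ true utility below $10+\varepsilon$, while every integral MNW outcome of the misreported split instance gives her true utility above $10+\varepsilon$; hence for every tie-breaking rule the misreport is profitable, and the indivisible MNW rule is not strategyproof. The main obstacle is this last step: the uniqueness argument pins down only the \emph{reported} utility profile, whereas agent $1$'s gain is measured in her \emph{true} utility, which need not be a function of the reported profile alone when several allocations realise that profile. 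I would close this gap either by a genericity/perturbation argument (perturb the reported values infinitesimally so that the divisible optimum is a unique allocation, using that the strict inequality ``true utility $>10$'' is preserved under small perturbations) or by arguing directly that the optimal reported profile forces agent $1$ to receive essentially all of $o_1$ together with a determined total fraction of her remaining goods, which pins her true utility in the limit.
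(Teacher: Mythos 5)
Your proposal takes essentially the same route as the paper: the paper's entire argument for this corollary is the one-line observation that each divisible good can be approximately modelled as many small indivisible goods, which is precisely your $k$-fold splitting of the two instances from Theorem~\ref{CEEI_SP}. In fact your write-up is considerably more careful than the paper's --- the convergence-of-optima argument that neutralises tie-breaking, and the reported-versus-true-utility subtlety you flag and patch at the end, are both glossed over entirely in the paper's one-sentence justification.
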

		
					The argument follows from the observation that
			each divisible good can be approximately modelled as small enough multiple indivisible goods.

			\section{The Curious Case of the Nash Bargaining Rule}
		
		In this section, we consider the Nash Bargaining rule that is very similar to several rules analyzed in the paper including MNW, CEEI, and HZ. We show that despite its close connections with these rules, it does not exhibit the same properties even under binary utilities. 
			A \emph{Nash bargaining (NB)} solution with disagreement point \(\{d_i\}_{i \in N}\) is a solution $x^*\in \mathcal{F}_b$ to the optimisation problem
			\begin{alignat*}{2}
			    \max_{x\in \mathcal{F}_b} & \mathrlap{\ \prod_{i \in N} \left(\sum_{o_j \in O} u_{ij} x_{ij} - d_i\right);} && \\
			    \text{s.t.} &\quad& d_i - \sum_{o_j \in O} u_{ij} x_{ij} &\leq 0, \\
				&& \sum_{i \in N} x_{ij} - 1 &\leq 0, \\
			    && -x_{ij} &\leq 0.
			\end{alignat*}
			We will use the disagreement point
			\[
				d_i = \frac{1}{n} \sum_{o_j \in O} u_{ij},
			\]
			referred to as uniform disagreement, as it comes from utilities achieved from a uniform allocation. The uniform agreement is standard (see e.g., \citep{ACGH20a}). \citet{ACGH20a} refer to NB as the benchmark for strong efficiency and fairness guarantees. One appealing aspect of NB is that it is invariant to shifting and scaling of reported utilities. 
		
			We first observe the NB is very similar in spirit to MNW, CEEI, and HZ. It is the same as MNW except that it takes into account the uniform disagreement point and restricts its attention to balanced assignments. We have already discussed the close links between MNW with CEEI and HZ. 
	Next we show that whereas MNW (equivalently CEEI) and HZ are envy-free and strategyproof for 1-0 utilities, NB fails both properties. In order to prove the results, we use the following characterization of NB that holds under 1-0 utilities. The characterization states that, under 1-0 utilities, computing NB is equivalent to leximin optimisation with respect to the differences between agent utilities and their disagreement points. 

	\begin{lemma}
		Under 1-0 utilities, any balanced allocation that is leximin-optimal with respect to the disagreement point is an NB solution.
	\end{lemma}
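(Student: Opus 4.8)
The plan is to exploit that the Nash bargaining objective is, after taking logarithms, a \emph{concave} function maximized over the convex set of balanced assignments, so it suffices to certify first-order (KKT) optimality of the leximin allocation. Write $t_i = u_i(x_i) - d_i$ for the surplus of agent $i$, and let $x^*$ be the given leximin-optimal balanced allocation with surpluses $t_i^*$. First I would dispose of feasibility: the uniform assignment $x_{ij}=1/n$ is balanced and yields $u_i = d_i$, hence surplus $0$; since leximin first maximizes the minimum surplus, every $t_i^* \ge 0$, so $x^*$ satisfies all the NB constraints $d_i - u_i(x_i)\le 0$. In the degenerate case where the leximin minimum surplus equals $0$, no balanced assignment makes all surpluses positive, so the NB product is $0$ on the whole feasible set and $x^*$ is trivially a maximizer; I would then assume $t_i^* > 0$ for all $i$ and work with $\log$.

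The main reduction is to weighted utilitarian welfare. By concavity of $\log$, for any feasible $x$ with surpluses $t_i$ one has $\sum_i \log t_i \le \sum_i \log t_i^* + \sum_i (t_i - t_i^*)/t_i^*$, and the correction term equals $\sum_{i,j}(u_{ij}/t_i^*)\,x_{ij}$ up to a constant independent of $x$. Hence $x^*$ is a global NB maximizer as soon as it maximizes the linear functional $W(x)=\sum_{i,j}(u_{ij}/t_i^*)\,x_{ij}$ over $\mathcal{F}_b$. Because $\mathcal{F}_b$ is a transportation polytope (doubly stochastic matrices), maximizing $W$ is a transportation LP, and by LP duality $x^*$ is optimal precisely when there exist row potentials $\rho_i$ and column (item) potentials $\mu_j$ with $u_{ij}/t_i^* \le \rho_i + \mu_j$ for all $i,j$, with equality on the support of $x^*$. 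Under $1$--$0$ utilities these inequalities read $\rho_i + \mu_j \ge 1/t_i^*$ on liked pairs and $\rho_i + \mu_j \ge 0$ on disliked pairs, with equality wherever $x^*_{ij}>0$.

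The remaining, and hardest, step is to construct these potentials from the layered structure of the leximin solution. I would order the distinct surplus values $\tau_1 < \cdots < \tau_m$ and the corresponding groups of agents, and build $\rho_i,\mu_j$ so that an item liked and actually consumed by agents of surplus level $\tau_\ell$ is priced consistently with $1/\tau_\ell$, while disliked ``filler'' items used only to balance allocations receive the complementary potential forcing $\rho_i+\mu_j=0$ on their support. The crux is verifying the slack inequalities for \emph{off-support} pairs: (i) if an agent likes an item that is consumed only by a strictly worse-off (lower-surplus, higher-weight) group, the inequality must still hold, and (ii) balancedness forces some agents to hold disliked items, so one must check that no liked pair is violated. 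Both follow from leximin optimality: any transfer that would give a worse-off agent more of a liked item at the expense of a better-off agent, or that would let an agent swap a disliked filler item for a liked one, would lexicographically improve the sorted surplus vector, contradicting the optimality of $x^*$; this ``no beneficial transfer toward the worse-off'' property is exactly what makes the dual potentials exist. I expect this potential construction --- in particular isolating the role of the balancedness (row) duals $\rho_i$, which are essential here and have no analogue in the unconstrained HZ price argument --- to be the main obstacle, whereas the concavity reduction and feasibility are routine.
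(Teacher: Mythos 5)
Your high-level strategy is sound and genuinely different from the paper's: the paper proves this lemma in two sentences, by citing the known fact (Halpern et al.) that under 1-0 utilities an improvement in Nash welfare is a leximin improvement and vice versa, and observing that this persists when each utility is replaced by utility minus the disagreement point. Your feasibility argument (uniform assignment gives surplus $0$), your handling of the degenerate case, and your concavity/linearization step are all correct, and the target you reduce to --- that the leximin point maximizes $W(x)=\sum_{i,j}(u_{ij}/t_i^*)x_{ij}$ over all of $\mathcal{F}_b$ --- is in fact a true statement. If completed, your route would be more self-contained than the paper's citation-based argument.

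However, the step you yourself call the crux --- the existence of the dual potentials --- is asserted rather than proved, and the justification you give would not survive. Consider three agents and three items where agent 1 likes only $o_1$ (so $d_1=1/3$), agent 2 likes $o_1,o_2$ ($d_2=2/3$), and agent 3 likes $o_2,o_3$ ($d_3=2/3$). The unique leximin solution is the matching $o_i \mapsto i$, with surpluses $(2/3,\,1/3,\,1/3)$. Here the \emph{worse-off} agent 2 likes an item ($o_1$) wholly consumed by the \emph{better-off} agent 1; your claim that leximin optimality forbids this configuration via a transfer argument fails, because agent 2 holds no disliked filler to swap (she is utility-capped at one unit of liked items), so no pairwise move realizes the improvement --- only multi-agent exchange chains could, and leximin must be played against cycles, not transfers. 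Moreover, the natural reading of your pricing scheme (items consumed by level-$\tau_\ell$ agents priced at $1/\tau_\ell$) gives $\mu_{o_2}=3$, hence $\rho_2=0$ by complementary slackness, and then the off-support liked pair $(2,o_1)$ violates dual feasibility since $\rho_2+\mu_{o_1}=3/2<3$; the correct duals here require $\rho_2=\mu_{o_2}=3/2$, i.e., the row (balancedness) dual of a capped agent must carry half the weight. A second uncovered failure mode: an exchange cycle among agents at the \emph{same} surplus level that lowers one agent's utility by $\epsilon$ and raises another's by $2\epsilon$ strictly increases $W$ but is not a leximin improvement, so "contradicting leximin optimality" does not apply to it; excluding it needs an argument that such a direction would imply a pure Pareto improvement. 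The clean way to close the gap is to pass to utility space: the achievable utility vectors of balanced assignments under 1-0 utilities form a polymatroid-type polytope; prove by a polymatroid exchange argument that at the leximin point every prefix union $S_{\le k}$ of surplus-level groups is tight; then Abel summation, $W(u)=\sum_{k<m}\left(1/\tau_k-1/\tau_{k+1}\right)u(S_{\le k})+\left(1/\tau_m\right)u(N)$ with positive coefficients, yields $W(u)\le W(u^*)$ for every feasible $u$. Without this (or an equivalent dual construction), your proof is a correct skeleton whose load-bearing step is missing.
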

	\begin{proof}
		It is known that under 1-0 utilities, an improvement in the Nash welfare results in a leximin improvement and a leximin improvement results in an improvement in the Nash welfare (see e.g., \citep{HPPS20a}). The fact holds if we interpret each agent's utility as the difference between her real utility and her disagreement point utility. 
	\end{proof}


	Next, we prove that NB is not strategyproof even for 1-0 utilities. The fact that it is not strategyproof for general cardinal utilities was already known (see for example the discussion of \citet{ACGH20a}). Our key insight for the case of 1-0 utilities is that agents  can benefit from expanding their set of liked items to artificially get a higher demand. Then their disagreement point increases, ensuring they get more allocated to them, and the item they do not actually like is higher-demanded, so the excess leans towards the item they like instead.
		
			%
			%

	\begin{theorem}\label{th:sp-nb}
		Even under 1-0 utilities, the Nash bargaining solution is not strategyproof.
	\end{theorem}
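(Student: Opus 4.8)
The plan is to exhibit an explicit $1$-$0$ instance together with a single-agent misreport that strictly raises that agent's true utility, and to carry out both NB computations through the leximin characterization established in the lemma above rather than through the product objective directly. Since for $1$-$0$ utilities the NB solution is exactly a balanced allocation that is leximin-optimal in the vector $(u_i(x_i)-d_i)_{i\in N}$, and since the uniform disagreement point satisfies $d_i=|D_i|/n$, I would pick an instance in which one agent $i$ can append a single heavily-demanded item $o\notin D_i$ to her reported liked set.

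First I would fix a small market (a handful of agents and items) designed so that some item $o$ is liked by several agents and is therefore over-demanded, while agent $i$'s truly-liked item(s) face less competition. I would then compute the truthful NB allocation by leximin water-filling on $(u_k(x_k)-d_k)_k$: each agent's guaranteed liked mass is raised in unison until a bottleneck (in the sense of the EPS/egalitarian analysis) freezes a set of agents, and the residual balancing mass is distributed among under-served agents. This pins down agent $i$'s truthful utility $u_i(x_i)$ explicitly.

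Next I would let agent $i$ report $D_i'=D_i\cup\{o\}$. This raises her disagreement point from $|D_i|/n$ to $(|D_i|+1)/n$, so in the leximin objective her term becomes $u_i'(x_i)-d_i'$ with a strictly larger baseline. Because leximin equalizes the slacks $u_k(x_k)-d_k$ as far as feasible, the larger baseline forces the mechanism to route more liked mass to agent $i$. The crucial point, which I would verify by recomputing the leximin-optimal balanced allocation, is that the contested item $o$ cannot absorb this extra mass---the other agents who like $o$ compete it away---so the additional allocation lands on the items in $D_i$ that agent $i$ genuinely values. Evaluating the resulting allocation against her \emph{true} $1$-$0$ utilities then shows $u_i(x_i)$ strictly increases, contradicting strategyproofness.

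The main obstacle is the passage from the second step to the third: I must certify that the allocation produced under the misreport is genuinely leximin-optimal (and hence, by the lemma, a bona fide NB solution) rather than merely plausible, while simultaneously tracking how the extra disagreement mass splits between the contested item $o$ and the truly-liked items, so that the net change in true utility is provably positive. This is a careful bookkeeping argument about how the leximin bottleneck structure shifts when a single disagreement point is perturbed; I would tame it by choosing the instance so that exactly one bottleneck set is affected, keeping the two leximin computations comparable term by term.
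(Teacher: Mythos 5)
Your plan is exactly the paper's approach: the paper's proof relies on the same leximin-characterization lemma and the same manipulation mechanism you describe --- an agent (agent 2, who truly likes only $o_1$) reports additionally liking an over-demanded item ($o_2$, liked by two other agents), which raises her disagreement point from $\nicefrac{1}{5}$ to $\nicefrac{2}{5}$ and forces the leximin solution to push her share of her truly-liked item from $\nicefrac{1}{2}$ up to $\nicefrac{11}{20}$. The bookkeeping you identify as the main obstacle is resolved in the paper by exhibiting an explicit $5$-agent, $5$-item instance and verifying leximin-optimality of both allocations with a short local-exchange argument (every attempted utility increase for one agent provably decreases another's), so your strategy does go through as stated.
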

	\begin{proof}
		Suppose truthful utilities were given by
		\begin{center}
			\setlength{\tabcolsep}{6pt}
			\begin{tabular}{ccccccccc}
				& \(o_1\) & \(o_2\) & \(o_3\) & \(o_4\) & \(o_5\) \\
				\midrule
				\(1\) & 1 & 0 & 0 & 0 & 0 \\
				\(2\) & 1 & 0 & 0 & 0 & 0 \\
				\(3\) & 0 & 1 & 0 & 0 & 0 \\
				\(4\) & 0 & 1 & 0 & 0 & 0 \\ 
				\(5\) & 0 & 0 & 1 & 1 & 1 \\
			\end{tabular}
		\end{center}
		but agent 2 misreports so the reported utility profile is actually
		\begin{center}
			\setlength{\tabcolsep}{6pt}
			\begin{tabular}{ccccccccc}
				& \(o_1\) & \(o_2\) & \(o_3\) & \(o_4\) & \(o_5\) \\
				\midrule
				\(1\) & 1 & 0 & 0 & 0 & 0 \\
				\(2\) & 1 & 1 & 0 & 0 & 0 \\
				\(3\) & 0 & 1 & 0 & 0 & 0 \\
				\(4\) & 0 & 1 & 0 & 0 & 0 \\ 
				\(5\) & 0 & 0 & 1 & 1 & 1 \\
			\end{tabular}
		\end{center}
		Under these profiles, NB solutions respectively are
		\begin{center}
			\(x = \) \begin{blockarray}{ccccccccccc}
				&&\matindex{$o_1$}&\matindex{$o_2$}& \matindex{$o_3$}& \matindex{$o_4$}& \matindex{$o_5$}& \\
					\begin{block}{c(cccccccccc)}
					\matindex{$1$}& & $\nicefrac{1}{2}$ & $0$ & $0$ & $\nicefrac{1}{2}$ & $0$ & \\ 
					\matindex{$2$}& & $\nicefrac{1}{2}$ & $0$ & $0$ & $\nicefrac{1}{2}$ & $0$ & \\
					\matindex{$3$}& & $0$ & $\nicefrac{1}{2}$ & $0$ & $0$ & $\nicefrac{1}{2}$ & \\
					\matindex{$4$}& & $0$ & $\nicefrac{1}{2}$ & $0$ & $0$ & $\nicefrac{1}{2}$ & \\ 
					\matindex{$5$}& & $0$ & $0$ & $1$ & $0$ & $0$ & \\
				\end{block}
			\end{blockarray},
		\end{center}
		and 
		\begin{center}
			\(x^* = \) \begin{blockarray}{ccccccccccc}
				&&\matindex{$o_1$}&\matindex{$o_2$}& \matindex{$o_3$}& \matindex{$o_4$}& \matindex{$o_5$}& \\
					\begin{block}{c(cccccccccc)}
					\matindex{$1$}& & $\nicefrac{9}{20}$ & $0$ & $0$ & $\nicefrac{11}{20}$ & $0$ & \\ 
					\matindex{$2$}& & $\nicefrac{11}{20}$ & $\nicefrac{1}{10}$ & $0$ & $\nicefrac{7}{20}$ & $0$ & \\
					\matindex{$3$}& & $0$ & $\nicefrac{9}{20}$ & $0$ & $\nicefrac{1}{10}$ & $\nicefrac{9}{20}$ & \\
					\matindex{$4$}& & $0$ & $\nicefrac{9}{20}$ & $0$ & $0$ & $\nicefrac{11}{20}$ & \\ 
					\matindex{$5$}& & $0$ & $0$ & $1$ & $0$ & $0$ & \\
				\end{block}
			\end{blockarray}.
		\end{center}
		Under truthful reporting, agent 2 gets utility of \(\nicefrac{1}{2}\), but under the misreported utilities, they get utility of \(\nicefrac{11}{20}\). Thus NB is not strategyproof, even under 1-0 utilities.
	
		It remains to show that \(x\) and \(x^*\) are actually NB solutions. For \(x\), this is simple. The lexicographic utilities with respect to the disagreement point are
		\[
			\left(\nicefrac{3}{10}, \nicefrac{3}{10}, \nicefrac{3}{10}, \nicefrac{3}{10}, \nicefrac{7}{10}\right).
		\]
		Since agents 1 and 2 only like item 1, and have exhausted its demand, any increase in either's utility will decrease that of the other agent. Similarly for agents 3 and 4. Agent 5 has reached the demand constraint, so they cannot increase their utility at all. Thus, the allocation is leximin-optimal with respect to the disagreement point, and hence an NB solution.

		For \(x^*\), our lexicographic utilities with respect to the disagreement point are
		\[
			\left(\nicefrac{1}{4}, \nicefrac{1}{4}, \nicefrac{1}{4}, \nicefrac{1}{4}, \nicefrac{7}{10}\right).
		\]
		Since object 1 and 2 are both exhausted of supply:
		\begin{enumerate}
			\item Attempting to increase agent 1's utility will decrease agent 2's.
			\item Attempting to increase agent 3 or 4's utility will decrease the other's or agent 2's.
			\item Attempting to increase agent 2's utility will decrease agent 1, 3, or 4's utility.
		\end{enumerate}
		Again, agent 5 has reached the demand constraint, so they cannot increase their utility at all. Thus, the allocation is leximin-optimal with respect to the disagreement point, and hence an NB solution.
	\end{proof}

	Next we prove that NB violates envy-freeness even under 1-0 utilities. The result is especially surprising because MNW and HZ both have close links with NB, but do satisfy envy-freeness, even for general cardinal utilities. 
		
			\begin{theorem}
				Even under 1-0 utilities, a Nash bargaining solution is not always envy-free.
			\end{theorem}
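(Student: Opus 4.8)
The plan is to refute envy-freeness by a counterexample, using the preceding Lemma to recognize Nash bargaining (NB) solutions without solving the optimisation directly: under 1-0 utilities it suffices to exhibit a balanced assignment that is leximin-optimal with respect to the difference vector $\bigl(u_i(x_i)-d_i\bigr)_{i\in N}$, and then name a pair of agents in that assignment where one envies the other. The guiding intuition is the mirror image of the mechanism behind Theorem~\ref{th:sp-nb}. Since under 1-0 utilities $d_i=\frac{1}{n}\sum_{o_j\in O}u_{ij}=\frac{1}{n}|D_i|$, an agent who likes more items carries a larger disagreement point; because NB leximin-maximises the \emph{differences} $u_i(x_i)-d_i$ rather than the raw utilities, such an agent must be granted a larger raw utility to reach the same difference level as an agent who likes fewer items. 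When two agents both like a common item but one of them additionally likes a second item, this effect pushes more of the shared item toward the higher-demand agent, and the lower-demand agent, who values \emph{only} the shared item, then envies them.

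Concretely, I would take an instance in which agents $1$ and $2$ both like $o_1$, agent $2$ additionally likes $o_2$, and the remaining agents and items make the market tight so that every liked item is fully demanded. A ready-made witness is the \emph{misreported} profile already analysed in the proof of Theorem~\ref{th:sp-nb}: there agent $1$ likes only $o_1$ while agent $2$ likes $\{o_1,o_2\}$, and the displayed NB solution $x^*$ allocates agent $1$ a $\nicefrac{9}{20}$ share of $o_1$ but agent $2$ an $\nicefrac{11}{20}$ share of $o_1$. As agent $1$ values only $o_1$, we have $u_1(x^*_1)=\nicefrac{9}{20}$ and $u_1(x^*_2)=\nicefrac{11}{20}$, so $u_1(x^*_2)>u_1(x^*_1)$ and agent $1$ strictly envies agent $2$. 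This is a direct raw-utility comparison, independent of the disagreement points, so the same single instance that witnesses the failure of strategyproofness simultaneously witnesses the failure of envy-freeness, and I would present it on its own as a five-agent, five-item $1$-$0$ profile.

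The substantive work, exactly as in Theorem~\ref{th:sp-nb}, is \emph{certifying} that the proposed $x^*$ really is an NB solution rather than merely a plausible one. By the preceding Lemma it is enough to verify that $x^*$ is leximin-optimal among balanced assignments for the difference vector, which I would do by computing the sorted differences explicitly and then arguing that no feasibility-preserving transfer can lexicographically improve the vector: each contested item ($o_1$ and $o_2$) is fully allocated, so raising any agent's utility forces a decrease for some agent sharing that item, and the worst-off agents in the difference ordering cannot be improved without harming an equally-well-off or worse-off agent. Making this exchange-infeasibility rigorous, essentially a short case analysis over which transfers the support of $x^*$ permits, is the main obstacle; it is the envy-freeness analogue of the KKT/leximin verification carried out for the strategyproofness theorem, and I would organise it as one brief argument per contested item rather than a global optimality computation.
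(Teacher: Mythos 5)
Your proposal is correct and matches the paper's proof exactly: the paper likewise disposes of this theorem in one line by observing that in the allocation \(x^*\) from the proof of Theorem~\ref{th:sp-nb}, agent 1 (who likes only \(o_1\) and receives \(\nicefrac{9}{20}\) of it) envies agent 2 (who receives \(\nicefrac{11}{20}\) of it). The certification that \(x^*\) is an NB solution, which you flag as the substantive work, is already carried out in the strategyproofness proof via the leximin lemma, so no additional argument is needed.
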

	\begin{proof}
		In the proof of Theorem~\ref{th:sp-nb}, agent 1 envies agent 2 in the allocation \(x^*\).
	\end{proof}
		
		We expect that the negative results that we have proved for NB also carry over if we consider any other disagreement point.

			\section{Conclusions}
		
			The assignment problem is one of the most fundamental and widely-encountered allocation problems. In this paper, we provided a deeper understanding of the strong links as well as subtle differences between various well-known rules encountered in the literature. 
	One of our main contributions is a unification of the literature by proving the equivalence of several rules under binary and bi-valued utilities. We also show that the well-known Nash bargaining rule fails envy-freeness even under 1-0 utilities which was not known in the literature (to the best of our knowledge).	We envisage further work on the topic both in terms of algorithm design as well as further understanding of the tradeoffs between axiomatic properties.

	\bibliographystyle{plainnat}


\end{document}